\newcommand{\R}{\mathbb{R}}
\newcommand{\C}{\mathbb{C}}
\newcommand{\E}{\mathbb{E}}
\newcommand{\ket}[1]{| #1 \rangle}
\newcommand{\ip}[2]{\langle #1|#2 \rangle}
\newcommand{\proj}[1]{| #1 \rangle \langle #1 |}
\DeclareMathOperator{\poly}{poly}
\DeclareMathOperator{\spann}{span}
\newcommand{\be}{\begin{equation}}
\newcommand{\ee}{\end{equation}}
\newcommand{\bea}{\begin{eqnarray}}
\newcommand{\eea}{\end{eqnarray}}
\newcommand{\bes}{\begin{equation*}}
\newcommand{\ees}{\end{equation*}}
\newcommand{\beas}{\begin{eqnarray*}}
\newcommand{\eeas}{\end{eqnarray*}}
\newtheorem*{rep@theorem}{\rep@title}
\newcommand{\newreptheorem}[2]{%
\newenvironment{rep#1}[1]{%
 \def\rep@title{#2 \ref{##1} (restated)}%
 \begin{rep@theorem}}%
 {\end{rep@theorem}}}
\newtheorem{thm}{Theorem}
\newtheorem*{thm*}{Theorem}
\newtheorem{lem}[thm]{Lemma}
\newtheorem*{lem*}{Lemma}
\newtheorem{prop}[thm]{Proposition}
\newtheorem{fact}[thm]{Fact}
\newcommand{\boxalgm}[3]{
\renewcommand{\figurename}{Algorithm}
\begin{figure}[tb]
\begin{center}
\noindent \framebox{
\begin{minipage}{.95\textwidth}
#3
\end{minipage}
}
\caption{#2}
\label{#1}
\end{center}
\end{figure}
\renewcommand{\figurename}{Figure}
}
\begin{document}

\title{Quantum walk speedup of backtracking algorithms}
\author{Ashley Montanaro\thanks{School of Mathematics, University of Bristol, UK; {\tt ashley.montanaro@bristol.ac.uk}.}}
\maketitle

\begin{abstract}
We describe a general method to obtain quantum speedups of classical algorithms which are based on the technique of backtracking, a standard approach for solving constraint satisfaction problems (CSPs). Backtracking algorithms explore a tree whose vertices are partial solutions to a CSP in an attempt to find a complete solution. Assume there is a classical backtracking algorithm which finds a solution to a CSP on $n$ variables, or outputs that none exists, and whose corresponding tree contains $T$ vertices, each vertex corresponding to a test of a partial solution. Then we show that there is a bounded-error quantum algorithm which completes the same task using $O(\sqrt{T} n^{3/2} \log n)$ tests. In particular, this quantum algorithm can be used to speed up the DPLL algorithm, which is the basis of many of the most efficient SAT solvers used in practice. The quantum algorithm is based on the use of a quantum walk algorithm of Belovs to search in the backtracking tree. We also discuss how, for certain distributions on the inputs, the algorithm can lead to an exponential reduction in expected runtime.
\end{abstract}


\section{Introduction}

Grover's quantum search algorithm~\cite{grover97} is one of the great success stories of quantum computation. One important domain to which the algorithm can be applied is the solution of constraint satisfaction problems (CSPs). Consider a constraint satisfaction problem (CSP) expressed as a predicate $P:[d]^n \rightarrow \{\text{true},\text{false}\}$, where $[d] = \{0,\dots,d-1\}$. We would like to find an assignment $x$ to the $n$ variables such that $P(x)$ is true, or output ``not found'' if no such $x$ exists. This framework encompasses many important problems such as boolean satisfiability and graph colouring. Grover's algorithm solves such a CSP using $O(\sqrt{d^n})$ evaluations of $P$, whereas with no further information about $P$, finding an $x$ such that $P(x)$ is true requires $\Omega(d^n)$ evaluations classically in the worst case.
However, when we are faced with an instance of a CSP in practice, we usually have some additional information about its structure. For example, $P$ may be defined as the conjunction of smaller constraints of a particular type, as in the case of graph colouring. This information often allows classical algorithms to solve the CSP significantly more efficiently than the above bound would suggest, throwing some doubt on whether straightforward use of Grover's algorithm will really be used to solve CSPs in practice.

One of the most important and most general classical tools to take advantage of problem structure, both in theory and in practice, is backtracking~\cite{vanbeek06}. This technique can be used when we have the ability to recognise whether partial solutions to a problem can be extended to full solutions. We assume that the predicate $P$ allows us to pass it a partial assignment $x$ of the form $x:S \rightarrow [d]$, where $S \subseteq \{1,\dots,n\}$, which specifies the values assigned to the variables in the set $S$. We can equivalently think of $x$ as an element of $\mathcal{D} := ([d] \cup \{\ast\})^n$, where the $\ast$'s represent the positions which are as yet unassigned values. We say that $x$ is {\em complete} if it contains no $\ast$'s. Then $P$ returns ``true'' if $x$ is a solution to $P$, ``false'' if it is clear that $x$ cannot be extended to a solution to $P$, and ``indeterminate'' otherwise. We say that a partial assignment $x$ is valid if $P(x)$ is true or indeterminate, and invalid if $P(x)$ is false.

Algorithm \ref{alg:backtrack} above describes a generic way to use this information classically. The algorithm assumes access to $P$ and a heuristic $h(x)$ which determines how to extend a given partial assignment $x$. We think of $P$ and $h$ as black boxes (``oracles''). The basic idea is to fail early: if we know that a partial assignment cannot be extended to a solution, we should give up on it and try a different one. We can think of the algorithm as exploring a tree, whose internal vertices are partial solutions to $P$, and whose leaves are solutions to $P$ or certificates that the partial solution cannot be extended to a complete solution. This tree is of size at most $O(d^n)$, but for some problem instances could be substantially smaller. 

A canonical example of a powerful backtracking algorithm which fits into the framework of Algorithm \ref{alg:backtrack} is the DPLL (Davis-Putnam-Logemann-Loveland) algorithm~\cite{davis60,davis62} for $k$-SAT. This algorithm forms the basis of many of the most successful SAT solvers used in practice~\cite{een03,lynce03,gomes08}. For many practically relevant problem instances, the algorithm runs more quickly than worst-case upper bounds would suggest. Another appealing aspect of this algorithm is that, unlike ``local search'' methods based on random walks or similar ideas, it can sometimes produce efficient proofs of unsatisfiability, corresponding to small backtracking trees.

\boxalgm{alg:backtrack}{General classical backtracking algorithm}{
Assume that we are given access to a predicate $P: \mathcal{D} \rightarrow \{ \text{true}, \text{false}, \text{indeterminate} \}$, and a heuristic $h: \mathcal{D} \rightarrow \{1,\dots,n\}$ which returns the next index to branch on from a given partial assignment.
\\[3pt]
Return {\tt bt}$(\ast^n)$, where {\tt bt} is the following recursive procedure:
\\[3pt]
{\tt bt}$(x)$:
\begin{enumerate}
\item If $P(x)$ is true, output $x$ and return.
\item If $P(x)$ is false, or $x$ is a complete assignment, return.
\item Set $j = h(x)$.
\item For each $w \in [d]$:
\begin{enumerate}
\item Set $y$ to $x$ with the $j$'th entry replaced with $w$.
\item Call {\tt bt}$(y)$.
\end{enumerate}
\end{enumerate}
}

Algorithm \ref{alg:backtrack} outputs all solutions $x$ such that $P(x)$ is true. While in practice the algorithm might be modified to terminate when the first solution is found, here we will assume throughout that the entire tree is explored. We assume that $P$ and $h$ can both be evaluated in time $\poly(n)$, so the most important contribution to the complexity of Algorithm \ref{alg:backtrack} is usually the number of vertices in the tree, which can often be exponential in $n$. To simplify the complexity bounds, we also assume throughout that $d=O(1)$; this is effectively without loss of generality as any predicate with local domain size $d$ can be replaced with one which uses $O(\log d)$ bits to encode each variable.




\subsection{Results}

We show here that there is a quantum equivalent of Algorithm \ref{alg:backtrack} which can be substantially faster:

\begin{thm}
\label{thm:detectbound}
Let $T$ be an upper bound on the number of vertices in the tree explored by Algorithm \ref{alg:backtrack}. Then for any $0 < \delta < 1$ there is a quantum algorithm which, given $T$, evaluates $P$ and $h$ $O(\sqrt{Tn} \log (1/\delta))$ times each, outputs true if there exists $x$ such that $P(x)$ is true, and outputs false otherwise. The algorithm uses $\poly(n)$ space, $O(1)$ auxiliary operations per use of $P$ and $h$, and fails with probability at most $\delta$.
\end{thm}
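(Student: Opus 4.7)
The plan is to reduce the problem to searching for a marked vertex in the backtracking tree $\mathcal{T}$ defined implicitly by $P$ and $h$, and then to invoke the quantum walk algorithm of Belovs for tree search, whose query complexity on a tree with at most $T$ vertices and depth at most $L$ is of the form $O(\sqrt{TL})$. The marked vertices will be exactly the solutions: those $x$ with $P(x) = \text{true}$.

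First I would make the implicit description of $\mathcal{T}$ precise: a vertex is the partial assignment $x \in \mathcal{D}$ reached by following the recursion in Algorithm~\ref{alg:backtrack}, so the root is $\ast^n$, a vertex $x$ with $P(x)$ indeterminate has children obtained by querying $j = h(x)$ and setting the $j$-th coordinate to each of the $d$ values in $[d]$, and a vertex with $P(x) \in \{\text{true},\text{false}\}$ is a leaf. Since every internal vertex fixes one more coordinate than its parent, the depth satisfies $L \leq n$.

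Next I would set up Belovs' quantum walk on this tree. Its unitary is the product of two reflections that decompose as direct sums of local operators acting on a vertex together with its children; in our setting the children of $x$ are determined by the single index $h(x)$ and the $d = O(1)$ values in $[d]$, so each reflection can be implemented using $O(1)$ queries to $P$ and $h$ together with $\poly(n)$ auxiliary operations on $\poly(n)$ space. Combining Belovs' theorem with $L \leq n$ yields a subroutine that, given $T$, uses $O(\sqrt{Tn})$ queries and decides whether a marked vertex exists with constant success probability. Amplifying by $O(\log(1/\delta))$ independent repetitions and taking a majority vote then produces the stated bounded-error guarantee.

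The main obstacle will be reconciling Belovs' algorithm, which is typically described on an explicitly given graph, with our oracle setting: I need to verify that the two reflections can be realised entirely from local queries to $P$ and $h$ at the current vertex, in particular that recovering the parent of $x$ (needed for the reflection along the edge from $x$ to its parent) does not incur more than $O(1)$ additional oracle calls. The key technical ingredient is to arrange the walk so that its quantum state carries enough bookkeeping—such as the branching index $j$ and value $w$ used to reach $x$ from its parent—to make bidirectional traversal cheap and to keep the per-step query cost $O(1)$.
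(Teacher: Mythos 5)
Your proposal matches the paper's proof: the paper likewise reduces to detecting a marked vertex in the backtracking tree via the tree special case of Belovs' quantum walk (giving $O(\sqrt{Tn})$ uses of the reflections $R_A$, $R_B$, amplified by $O(\log(1/\delta))$ repetitions of phase estimation), and then shows the reflections are implementable with $O(1)$ calls to $P$ and $h$ by labelling each vertex with its sequence of (index, value) assignments so that parent and children are locally computable. The only cosmetic differences are that the paper handles the possibly-marked root by checking $P(\ast^n)$ separately first, and it tracks the auxiliary cost as $O(1)$ operations per oracle call rather than your looser $\poly(n)$.
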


We usually think of $T$ as being exponential in $n$; in this regime this complexity is a near-quadratic speedup over the classical algorithm. The algorithm can be modified to find a solution, rather than just detect the existence of one, with a small runtime penalty:

\begin{thm}
\label{thm:findbound}
Let $T$ be the number of vertices in the tree explored by Algorithm \ref{alg:backtrack}. Then for any $0 < \delta < 1$ there is a quantum algorithm which evaluates $P$ and $h$ $O(\sqrt{T}n^{3/2} \log n \log (1/\delta))$ times each, and outputs $x$ such that $P(x)$ is true, or ``not found'' if no such $x$ exists. If we are promised that there exists a unique $x_0$ such that $P(x_0)$ is true, there is a quantum algorithm which outputs $x_0$ using $P$ and $h$ $O(\sqrt{Tn} \log^3 n \log(1/\delta))$ times each. In both cases the algorithm uses $\poly(n)$ space, $O(1)$ auxiliary operations per use of $P$ and $h$, and fails with probability at most $\delta$.
\end{thm}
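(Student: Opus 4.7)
The overall plan is to bootstrap from the detection algorithm of Theorem~\ref{thm:detectbound}, using it as a subroutine inside subtrees of the original backtracking tree. The key observation is that for any partial assignment $x$, we can define a modified predicate $P_x$ which returns false on every $y$ that is not an extension of $x$, and otherwise agrees with $P$; the backtracking tree for $(P_x, h)$ is then a subtree of the original, with size at most $T$, and Theorem~\ref{thm:detectbound} applies to it with the same $T$.

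For the general case, I would perform a guided descent. Starting from the root $\ast^n$, at each step I take the current partial assignment $x$, set $j = h(x)$, and for each $w \in [d]$ run the detection algorithm on the subtree rooted at the child $y_w$ obtained by setting $x_j \leftarrow w$; I move to the first $y_w$ that reports a solution. If the initial detection call at the root reports no solution, I output ``not found.'' Since each descent terminates after at most $n$ levels and $d = O(1)$, the total number of detection calls is $O(n)$. To ensure overall failure probability at most $\delta$, I amplify each call to error $O(\delta/n)$, costing an extra $O(\log(n/\delta))$ factor per call. Using the $O(\sqrt{Tn} \log(1/\delta'))$ bound of Theorem~\ref{thm:detectbound} and summing gives total cost
\[
O\!\left(n \cdot \sqrt{Tn} \cdot \log(n/\delta)\right) \;=\; O\!\left(\sqrt{T}\, n^{3/2} \log n \,\log(1/\delta)\right),
\]
matching the stated bound.

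When the solution $x_0$ is promised to be unique, I would replace the linear descent by a binary search that extracts $x_0$ using only polylogarithmically many detection calls. For each variable $i$ and each value $w \in [d]$, detecting on the predicate $P \wedge (x_i = w)$ tells us whether $x_0[i] = w$; uniqueness means the results across the $n$ variables are consistent, and a binary-search-style procedure over positions (rather than a linear scan) determines $x_0$ from $O(\log n)$ carefully chosen subset queries, each realized as detection on a restricted subtree of size at most $T$. Combined with amplitude-estimation-style amplification to ensure that all queries succeed with the required probability, this yields total cost $O(\sqrt{Tn}\,\log^3 n \,\log(1/\delta))$.

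The main obstacle is verifying that the ``restricted subtree'' detection really can be implemented without enlarging the effective value of $T$: one has to check that wrapping $P$ to immediately reject assignments inconsistent with a chosen prefix (or with a chosen constraint $x_i = w$) produces a tree which is a subtree of the original, so that the guarantee of Theorem~\ref{thm:detectbound} transfers. The error-budget bookkeeping across $O(n)$ or $O(\polylog n)$ detection calls via the union bound is then routine; the more delicate piece is the unique-solution case, where the binary-search queries must be chosen so that each is expressible as a valid restricted backtracking instance and so that the number of rounds is only polylogarithmic.
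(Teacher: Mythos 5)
Your treatment of the general case is essentially the paper's: descend through the tree, running the detection algorithm of Theorem \ref{thm:detectbound} on the subtrees rooted at the children of the current vertex, with each subtree realised by restricting the predicate to extensions of the current partial assignment and the parameter $T$ left unchanged. Two details are missing. First, Theorem \ref{thm:findbound} (unlike Theorem \ref{thm:detectbound}) does not assume $T$ is given --- the paper stresses that the bound is instance-dependent --- so you must also search over $T$ by exponential doubling, and since the detection guarantee can fail when the guess for $T$ is too small, you must verify that the vertex finally returned is genuinely marked before terminating. This is what produces the paper's $O(n^2)$ detection calls and its $O(\log n)$ amplification factor; your final bound happens to coincide, but the algorithm as you describe it cannot be run without knowing $T$. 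Second, marked vertices are leaves of the backtracking tree, so when no child's subtree reports a solution you must check whether the current vertex itself satisfies $P$; as written your descent would stall there.

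The unique-solution case is where your proposal genuinely fails. Each run of the detection algorithm on a restricted instance returns a single bit, and $x_0 \in [d]^n$ carries $\Theta(n)$ bits of information, so no scheme of ``$O(\log n)$ carefully chosen subset queries'' can determine $x_0$: information-theoretically you need $\Omega(n)$ one-bit detection queries, which lands you back at $O(\sqrt{T}\,n^{3/2})$ with no improvement over the general case. The paper's argument in Section \ref{sec:uniquemarked} is of a different character: it exploits the fact that the eigenvalue-$1$ eigenvector $\ket{\phi}$ of $R_B R_A$ defined in (\ref{eq:phi}) is supported on the entire path from the root to $x_0$, so approximately preparing $\ket{\phi}/\|\ket{\phi}\|$ by phase estimation on $\ket{r}$ and measuring in the computational basis yields (with probability $1/2$) a uniformly random vertex on that path --- i.e.\ a full prefix of $x_0$, which is $\Theta(n)$ bits extracted from a single $O(\sqrt{Tn})$-cost state preparation. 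Recursing from the measured vertex gives $O(\log n)$ expected rounds, and the error analysis of the approximate preparation requires the technical bound $\|P_\epsilon \ket{\phi^\perp}\| = O(\epsilon\sqrt{Tn})$ of Lemma \ref{lem:lowepsbound}, proved via an explicit witness for the effective spectral gap lemma. None of this machinery is present in, or replaceable by, your binary-search proposal, so the $O(\sqrt{Tn}\log^3 n \log(1/\delta))$ claim is unproved as it stands.
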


We stress that these results can be applied to any backtracking algorithm which fits into the framework of Algorithm~\ref{alg:backtrack}, whatever the predicate $P$ or the choice of the heuristic $h$. In particular, they can be applied to the DPLL algorithm with the commonly used ``unit clause'' heuristic. Theorems \ref{thm:detectbound} and \ref{thm:findbound} can also be applied to backtracking algorithms which make use of randomness in the heuristic $h$, by interpreting these algorithms as first fixing a random seed, then using this seed as input to a deterministic heuristic $h$. Observe that the runtime bound of Theorem \ref{thm:findbound} is instance-dependent and, to use it, we do not need to know an upper bound on the runtime $T$ of the underlying classical backtracking algorithm. For instances on which the classical algorithm runs quickly, the quantum algorithm also runs quickly.

These algorithms can be leveraged to obtain an {\em exponential} separation between average quantum and classical runtimes. The speedup for any given instance is approximately quadratic. However, given the right distribution on the input instances, this can be amplified to an exponential separation. This is discussed further in Section \ref{sec:averagecase}.


\subsection{Techniques}

The algorithms which achieve the bounds of Theorems \ref{thm:detectbound} and \ref{thm:findbound} are based on the use of a discrete-time quantum walk to find a marked vertex within the tree produced by the classical backtracking algorithm, corresponding to a partial solution $x$ such that $P(x)$ is true. Quantum walks have become a basic tool in quantum algorithm design~\cite{childs03,ambainis04,szegedy04,magniez11}. In particular, they have been applied in several contexts to solve search problems on graphs~\cite{shenvi03,szegedy04,magniez11,krovi15}, sometimes achieving up to a quadratic speedup over classical algorithms. However, in prior work it is usually assumed that the input graph is known in advance, and moreover that the initial state of the quantum walk is the stationary distribution of the corresponding random walk. Aaronson and Ambainis~\cite{aaronson05} described a different approach to spatial search on graphs; this does not use a quantum walk, but also assumes the input graph is known in advance.

Here we would like to use quantum walks in a context where the input graph is defined implicitly by the backtracking algorithm and hence is not known in advance, and where the walk starts at the root of the tree. One of the few cases where such walks have been studied is beautiful work of Belovs~\cite{belovs13,belovs13a}. The main result of that work relates the complexity of detecting a marked vertex by quantum walk on a graph to the {\em effective resistance} of the graph. Informally, this quantity is determined by thinking of the graph as an electrical circuit and calculating the resistance between the initial vertex and the set of marked vertices. Belovs' result can be seen as a quantum variant of previous classical work characterising properties of random walks on graphs (such as the commute time and cover time) in terms of effective resistance~\cite{chandra97}.

The main quantum subroutine used here is just the special case of Belovs' result where the underlying graph is a tree, for which we include a slightly more concise correctness proof. We are also able to extend Belovs' work to give an algorithm for finding a marked vertex in a tree, rather than just detecting one. This can easily be achieved using binary search; in the case where there is promised to be a unique marked element, we give a more efficient algorithm based on analysing eigenvectors of the quantum walk.

Once we have the quantum search algorithm, all that remains is to check the claim that the $P$ and $h$ functions can indeed be used to implement the required quantum walk operations, namely mixing across the neighbours of a vertex in the tree, dependent on whether the vertex is marked. To do this one has to be careful to ensure that the quantum walk steps are implemented efficiently.


\subsection{Other prior work}

Backtracking is a fundamental technique in computer science and has been studied since at least the 1960s. The classical literature on this topic is too vast to summarise here; see~\cite{knuth75,freuder06,vanbeek06} for introductions to the topic and historical overviews. Cerf, Grover and Williams attempted to find a direct quantum speedup of backtracking algorithms~\cite{cerf00}. The algorithm of~\cite{cerf00} is based on a nested version of Grover search. The complete tree of partial assignments is expanded to a certain depth, then quantum search is performed within the subspace of partial assignments which have not yet been ruled out. The complexity of the algorithm depends on the number of valid partial assignments at this depth. It is argued in~\cite{cerf00} that, for some reasonable distributions on random CSPs, the average complexity of the quantum algorithm (over the distribution on instances) will be smaller than would be obtained from Grover search. By contrast, the bounds of Theorems \ref{thm:detectbound} and \ref{thm:findbound} hold in the worst case and are applicable to arbitrary backtracking algorithms: if a faster backtracking algorithm is found, we immediately obtain a faster quantum algorithm.

The algorithm used here can be seen as an extreme version of the nested search strategy of~\cite{cerf00}. The diffusion operation used in the quantum walk can be viewed as applying Grover search within a subspace spanned by a vertex in the tree and its children. The algorithm repeatedly performs these searches across many vertices and levels simultaneously. On the other hand, the algorithm of~\cite{cerf00} can be seen as accelerating a restricted classical backtracking algorithm which uses a predicate $P$ which is only capable of detecting whether partial assignments at a particular level are false.

Similarly to the present work, Farhi and Gutmann~\cite{farhi98} have studied the use of quantum walks to speed up classical backtracking algorithms by searching within the backtracking tree. These authors showed that there are some trees for which continuous-time quantum walks can be used to find a marked vertex exponentially faster than a classical random walk. The special structure of these trees leads to interference effects which enable the quantum walk to penetrate the tree more quickly than the random walk. However, for the examples presented in~\cite{farhi98} where there is an exponential speedup of this form, the structure of the tree enables an alternative classical algorithm to also find a marked vertex efficiently. Here, we seek to accelerate classical search in arbitrary trees, with no prior assumptions about the structure of the tree.

A related, but different, approach towards quantum speedup of recursive classical algorithms was proposed by F\"urer~\cite{furer08}. Imagine we have a constraint satisfaction problem for which we can put a non-trivial upper bound $L$ on the number of leaves in the computation tree of a recursive classical algorithm for solving the problem. The idea of~\cite{furer08} was to apply Grover search over the leaves of the computation tree to find a solution in time $O(\sqrt{L} \poly(n))$. This approach relies on knowing, in advance, an efficiently computable mapping associating each integer between 1 and $L$ with a leaf. For many more complicated recursive algorithms we may not know such a mapping. Indeed, there is some evidence that it may not be possible to compute such a mapping for general backtracking algorithms in polynomial time~\cite{stockmeyer85}. The quantum algorithm presented here, on the other hand, can be applied to any classical backtracking algorithm, even if we do not know a bound on $L$ in advance.

A somewhat similar idea to F\"urer's was previously used by Angelsmark, Dahll\"of and Jonsson~\cite{angelsmark02} to obtain quantum speedups for CSPs. These authors observed that, for certain CSPs, one can construct a set of $d^{cn}$ easily checked certificates, for some $c < 1$, such that the existence of a solution to the CSP is certified by at least one certificate. Then Grover search can be used to find a certificate, if one exists, in time $O(d^{cn/2} \poly(n))$.

An alternative, and simpler, approach to find quantum speedups of classical algorithms for CSPs is the use of amplitude amplification~\cite{brassard02}. This can be applied to any classical algorithm which can be expressed as repeatedly running a randomised subroutine which runs in time $\poly(n)$ and finds a solution with probability $p$. The corresponding quantum algorithm has a runtime of $O((1/\sqrt{p}) \poly(n))$, a near-quadratic improvement on the classical $O((1/p)\poly(n))$ if $p$ is small. For example, it was observed by Ambainis~\cite{ambainis04} that Sch\"oning's efficient randomised algorithm for $k$-SAT~\cite{schoning99} can be accelerated in this way; Dantsin, Kreinovich and Wolpert~\cite{dantsin05} gave several other examples. Deterministic backtracking algorithms are, of course, not amenable to this approach.

Finally, a completely different technique for solving CSPs is the quantum adiabatic algorithm~\cite{farhi00}. Although there is some numerical evidence that this algorithm may outperform classical algorithms for CSPs~\cite{farhi01}, the adiabatic algorithm's runtime is hard to analyse for large input sizes and there is as yet no analytical proof of its superiority over classical algorithms.

Quantum walks on trees have been used previously in a quite different context, to obtain a near-quadratic speedup for evaluation of AND-OR formulae~\cite{ambainis10b}. In that algorithm the structure of the formula (which is known in advance) defines the tree on which the walk takes place. It is interesting to note that the quantum walk used in~\cite{ambainis10b} is similar to the quantum walk used here, but has apparently quite different properties. Another case in which the concept of effective resistance was used in quantum computing is work by Wang, which gave an efficient quantum algorithm for approximating effective resistances~\cite{wang13}. This uses some similar ideas to the present work but does not seem directly applicable.


\subsection{Organisation}

We begin in Section \ref{sec:qwalk} by describing the main underlying quantum ingredient, the use of a quantum walk to detect a marked vertex in a tree. This algorithm is a special case of an algorithm described by Belovs~\cite{belovs13}. We then go on in Sections \ref{sec:finding} and \ref{sec:uniquemarked} to describe extensions to this algorithm to allow finding a marked vertex, and a faster runtime in the case where we know there is a unique marked vertex. Section \ref{sec:walkbacktrack} shows that the algorithm can be applied to accelerate backtracking algorithms for CSPs. Section \ref{sec:averagecase} discusses how to use the algorithm to obtain exponential reductions in expected runtime, while Section \ref{sec:barriers} concludes with a discussion of some ways in which the algorithm could be improved, and barriers to doing so.


\subsection{Preliminaries}

We will need the following tools, which have been used many times elsewhere in quantum algorithm design:

\begin{lem}[Effective spectral gap lemma~\cite{lee11}]
\label{lem:effsg}
Let $\Pi_A$ and $\Pi_B$ be projectors on the same Hilbert space, and set $R_A = 2\Pi_A - I$, $R_B = 2\Pi_B-I$. Let $P_\chi$ be the projector onto the span of the eigenvectors of $R_B R_A$ with eigenvalues $e^{2i \theta}$ such that $|\theta| \le \chi$. Then, for any vector $\ket{\psi}$ such that $\Pi_A \ket{\psi} = 0$, we have
\[ \|P_\chi \Pi_B \ket{\psi} \| \le \chi \|\ket{\psi} \|. \]
\end{lem}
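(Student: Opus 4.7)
The plan is to reduce to the standard two-dimensional picture via Jordan's lemma, which asserts that any two projectors $\Pi_A$, $\Pi_B$ on a finite-dimensional Hilbert space can be simultaneously put in block-diagonal form, with blocks of dimension one or two that are jointly invariant under $\Pi_A$ and $\Pi_B$ (equivalently under $R_A$ and $R_B$). Since the claim is about norms of operators built from $\Pi_A$, $\Pi_B$, $R_B R_A$, it suffices to verify the inequality block by block and then assemble via Pythagoras: writing $\ket{\psi} = \sum_j \ket{\psi_j}$ in the Jordan decomposition, one will have $\|P_\chi \Pi_B \ket{\psi}\|^2 = \sum_j \|P_\chi \Pi_B \ket{\psi_j}\|^2$.

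Within a two-dimensional block indexed by a principal angle $\theta_j \in [0,\pi/2]$, choose orthonormal vectors $\ket{a_j}, \ket{a_j^\perp}$ and $\ket{b_j}, \ket{b_j^\perp}$ so that $\Pi_A$ projects onto $\ket{a_j}$ and $\Pi_B$ onto $\ket{b_j}$, with $|\ip{a_j}{b_j}| = \cos\theta_j$. A direct computation shows $R_B R_A$ restricted to this block is a rotation by $2\theta_j$, so its eigenvalues are $e^{\pm 2i\theta_j}$; in particular $P_\chi$ acts as the identity on the entire block when $|\theta_j| \le \chi$ and as zero otherwise. The hypothesis $\Pi_A \ket{\psi} = 0$ forces $\ket{\psi_j}$ to be proportional to $\ket{a_j^\perp}$, hence $\Pi_B \ket{\psi_j}$ is proportional to $\ket{b_j}$ with norm $\sin\theta_j \, \|\ket{\psi_j}\|$. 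Combining these two observations with $\sin\theta_j \le \theta_j$ gives $\|P_\chi \Pi_B \ket{\psi_j}\| \le \chi \, \|\ket{\psi_j}\|$ exactly when needed.

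The remaining one-dimensional blocks are handled by a short case analysis on whether $\Pi_A$ and $\Pi_B$ act as $0$ or $I$; in each case one checks that either $\ket{\psi_j}=0$ (forced by $\Pi_A \ket{\psi}=0$), or $\Pi_B \ket{\psi_j}=0$, or the relevant eigenvalue of $R_B R_A$ is $-1$ so that $P_\chi \Pi_B \ket{\psi_j}=0$ whenever $\chi < \pi/2$ (and the bound is trivial for $\chi \ge \pi/2$ since the left-hand side is at most $\|\ket{\psi_j}\|$). Summing the per-block inequalities yields $\|P_\chi \Pi_B \ket{\psi}\|^2 \le \chi^2 \sum_j \|\ket{\psi_j}\|^2 = \chi^2 \|\ket{\psi}\|^2$, as required. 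The only mildly delicate point is the bookkeeping of the 1D blocks; the main content of the proof is the geometric observation that on a 2D Jordan block, $P_\chi$ is all-or-nothing and the sine factor coming from $\Pi_B$ applied to a vector perpendicular to the $A$-line is bounded by the angle $\theta_j$ itself.
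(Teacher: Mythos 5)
Your proof is correct. Note that the paper itself does not prove this lemma --- it is stated in the preliminaries and attributed to~\cite{lee11} --- so there is no in-paper argument to compare against; your route via Jordan's lemma (two-dimensional jointly invariant blocks, $R_B R_A$ acting as a rotation by $2\theta_j$ so that $P_\chi$ is all-or-nothing per block, and the factor $\sin\theta_j \le \theta_j \le \chi$ coming from $\Pi_B$ applied to the component along $\ket{a_j^\perp}$) is exactly the standard proof from the literature, and your handling of the one-dimensional blocks, including the eigenvalue $-1$ case and the triviality of the bound for $\chi \ge \pi/2$, is sound.
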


\begin{thm}[Phase estimation~\cite{cleve98a,kitaev95}]
\label{thm:phaseest}
For every integer $s \ge 1$, and every unitary $U$ on $m$ qubits, there exists a uniformly generated quantum circuit $C$ such that $C$ acts on $m + s$ qubits and:
\begin{enumerate}
\item $C$ uses the controlled-$U$ operator $O(2^s)$ times, and contains $O(s^2)$ other gates.
\item For every eigenvector $\ket{\psi}$ of $U$ with eigenvalue 1, $C \ket{\psi}\ket{0^s} = \ket{\psi}\ket{0^s}$.
\item If $U \ket{\psi} = e^{2 i \theta} \ket{\psi}$, where $\theta \in (0,\pi)$, then $C \ket{\psi}\ket{0^s} = \ket{\psi}\ket{\omega}$, where $\ket{\omega}$ satisfies $|\ip{\omega}{0^s}|^2 = \sin^2(2^s \theta) / (2^{2s} \sin^2 \theta)$.
\item For any $\ket{\phi} \in (\C^2)^{\otimes m}$, expanded as $\ket{\phi} = \sum_k \lambda_k \ket{\psi_k}$, where $\ket{\psi_k}$ is an eigenvector of $U$ with eigenvalue $e^{2i\theta_k}$, then
\[ C\ket{\phi}\ket{0^s} = \sum_k \lambda_k \ket{\psi_k} \ket{\omega_k}, \]
where $\sum_{k: \theta_k \ge \epsilon} |\ip{\omega_k}{0^s}|^2 = O(1/(2^s \epsilon))$.
\end{enumerate}
Call $2^{-s}$ the precision of the circuit.
\end{thm}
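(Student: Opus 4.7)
The plan is to realise $C$ as the textbook phase estimation circuit and then verify each item by direct calculation. Allocate an $s$-qubit ancilla initialised to $\ket{0^s}$, apply a layer of Hadamards to produce $2^{-s/2}\sum_{y=0}^{2^s-1}\ket{y}$, then for $j=0,1,\dots,s-1$ apply controlled-$U^{2^j}$ with the $j$-th ancilla qubit as control, implemented by iterating controlled-$U$ a total of $2^j$ times, and finish with the inverse quantum Fourier transform on the ancilla. The controlled-$U$ count is $\sum_{j=0}^{s-1} 2^j = 2^s - 1 = O(2^s)$, and the inverse QFT (together with the Hadamards) contributes $O(s^2)$ other gates, establishing Part 1.

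For Parts 2 and 3, I would consider a single eigenstate $\ket{\psi}$ with $U\ket{\psi} = e^{2i\theta}\ket{\psi}$. The controlled cascade converts the ancilla into $2^{-s/2}\sum_{y=0}^{2^s-1} e^{2iy\theta}\ket{y}$ tensored with $\ket{\psi}$, and the inverse QFT then yields $\ket{\psi}\ket{\omega}$ with $\ip{0^s}{\omega} = 2^{-s}\sum_{y=0}^{2^s-1} e^{2iy\theta}$. Evaluating the geometric sum and using $|1-e^{i\alpha}|^2 = 4\sin^2(\alpha/2)$ gives $|\ip{0^s}{\omega}|^2 = \sin^2(2^s\theta)/(2^{2s}\sin^2\theta)$, which is Part 3. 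Part 2 is the $\theta = 0$ limit: the controlled cascade acts trivially and the inverse QFT sends the uniform superposition back to $\ket{0^s}$.

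For Part 4, expand $\ket{\phi} = \sum_k \lambda_k \ket{\psi_k}$ in the eigenbasis of $U$ and apply linearity to obtain $C\ket{\phi}\ket{0^s} = \sum_k \lambda_k \ket{\psi_k}\ket{\omega_k}$, with each $\ket{\omega_k}$ determined by the single-eigenstate calculation above. The required tail bound reduces to estimating $\sin^2(2^s\theta_k)/(2^{2s}\sin^2\theta_k)$ for $\theta_k \ge \epsilon$, which one handles using $\sin\theta \ge (2/\pi)\min(\theta,\pi-\theta)$ on $(0,\pi)$ combined with the observation that the inverse QFT effectively samples a lattice of phases of spacing $\pi/2^s$, so that summing over the tail behaves like a Riemann sum for $\int_\epsilon^{\pi/2} d\theta/\sin^2\theta$ and yields $O(1/(2^s\epsilon))$. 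The main obstacle is precisely this aggregate estimate: a naive pointwise application of the sine bound gives only the weaker $O(1/(2^{2s}\epsilon^2))$, and recovering the sharper $1/(2^s\epsilon)$ scaling requires exploiting the lattice structure of the phases produced by the inverse QFT; everything else is routine circuit accounting and geometric-series algebra.
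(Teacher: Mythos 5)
The paper does not prove this theorem at all --- it is imported wholesale from the phase-estimation literature~\cite{cleve98a,kitaev95} --- so the only question is whether your from-scratch argument is sound. Your circuit construction and the verification of Parts 1--3 are the standard calculation and are fine. The problem is Part 4, and specifically the step you yourself identify as the ``main obstacle.'' The Riemann-sum-over-a-lattice idea you invoke is the right tool for a \emph{different} sum: for a fixed eigenvector with phase $\theta$, the measurement outcomes $y/2^s$ lie on a lattice of spacing $2^{-s}$, and summing the outcome probabilities over all $y$ with $|y/2^s - \theta/\pi| \ge \Delta$ gives the familiar $O(1/(2^s\Delta))$ concentration bound. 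Part 4, however, fixes the outcome $0^s$ and sums over the \emph{eigenvectors} $k$ of $U$; the eigenphases $\theta_k$ of an arbitrary unitary are not on a lattice, so the Riemann-sum argument has nothing to grip. Worse, the unweighted sum as literally written cannot be bounded by $O(1/(2^s\epsilon))$ by any argument: take $U$ with $2^m$ eigenvectors all having $\theta_k = \epsilon$ with $2^s\epsilon \approx \pi/2$, so that each term $\sin^2(2^s\theta_k)/(2^{2s}\sin^2\theta_k) = \Omega(1)$ and the sum is $\Omega(2^m)$.

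The statement that is actually true, and the one the paper uses in Lemma~\ref{lem:algdetect} and Section~\ref{sec:uniquemarked}, is the weighted version $\sum_{k:\theta_k\ge\epsilon}|\lambda_k|^2\,|\ip{\omega_k}{0^s}|^2 = O(1/(2^s\epsilon))$ (in the paper's first application the weights $|\lambda_k|^2$ are reinstated implicitly via $\sum_k|\lambda_k|^2=1$; in the second they are dropped via $|\lambda_k|\le 1$ but only the weighted bound is needed). For the weighted sum, the ``naive pointwise'' estimate you dismiss already suffices: each term satisfies $|\ip{\omega_k}{0^s}|^2 \le \min\bigl(1,\, O(1/(2^s\epsilon))^2\bigr) \le O(1/(2^s\epsilon))$, using $\min(1,a^2)\le a$, and the weights sum to at most $1$. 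So the sharper scaling requires no lattice structure at all --- your extra machinery is both unnecessary and unjustified. One further point you should make explicit: since $e^{2i\theta}\to 1$ as $\theta\to\pi$, eigenphases near $\pi$ make $|\ip{\omega_k}{0^s}|^2$ close to $1$, not small; the bound $\sin\theta\ge(2/\pi)\min(\theta,\pi-\theta)$ does not rescue this. The condition ``$\theta_k\ge\epsilon$'' must be read with the convention $\theta_k\in(-\pi/2,\pi/2]$ and $|\theta_k|\ge\epsilon$, consistent with how eigenphases are parametrised in Lemma~\ref{lem:effsg}.
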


Phase estimation is normally used to estimate eigenvalues of $U$ (hence its name); here, however, similarly to~\cite{magniez11} we will only need to apply it to distinguish the eigenvalue 1 from other eigenvalues. If the smallest nonzero phase is $\epsilon$, this can be done with $O(1/\epsilon)$ uses of controlled-$U$.

\begin{fact}[Close states and measurement outcomes, e.g.~\cite{bernstein97}]
\label{fact:closestates}
Let $\ket{\psi_1}$, $\ket{\psi_2}$ be quantum states satisfying $\| \ket{\psi_1} - \ket{\psi_2} \| = \epsilon$. Then the total variation distance between the two distributions on measurement outcomes obtained by measuring each state in the computational basis is at most $\epsilon$.
\end{fact}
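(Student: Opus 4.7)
The plan is to reduce the claim to a direct computation on the probabilities $p_i = |\langle i|\psi_1\rangle|^2$ and $q_i = |\langle i|\psi_2\rangle|^2$, and bound their total variation distance $\frac{1}{2}\sum_i |p_i - q_i|$ by $\|\ket{\psi_1} - \ket{\psi_2}\|$ using Cauchy--Schwarz. The key algebraic step is the factorisation
\[
|p_i - q_i| = \bigl||\langle i|\psi_1\rangle| - |\langle i|\psi_2\rangle|\bigr| \cdot \bigl(|\langle i|\psi_1\rangle| + |\langle i|\psi_2\rangle|\bigr),
\]
after which I would sum over $i$ and apply Cauchy--Schwarz to split the sum into a product of two $\ell_2$-norms.

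For the first factor I would use the reverse triangle inequality pointwise to get $\sum_i (|\langle i|\psi_1\rangle| - |\langle i|\psi_2\rangle|)^2 \le \sum_i |\langle i|\psi_1\rangle - \langle i|\psi_2\rangle|^2 = \|\ket{\psi_1}-\ket{\psi_2}\|^2 = \epsilon^2$. For the second, I would expand and use $\|\ket{\psi_1}\|=\|\ket{\psi_2}\|=1$ to get a bound of $4$. Combining these and including the $\tfrac{1}{2}$ in the definition of total variation distance produces exactly the bound $\epsilon$.

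Alternatively, one could route the argument through the trace distance, using that for pure states it equals $\sqrt{1 - |\langle \psi_1|\psi_2\rangle|^2}$; choosing the global phase of $\ket{\psi_2}$ so that $\langle \psi_1|\psi_2\rangle$ is real and non-negative yields $\|\ket{\psi_1} - \ket{\psi_2}\|^2 = 2 - 2\langle \psi_1|\psi_2\rangle$, hence trace distance $= \sqrt{\epsilon^2 - \epsilon^4/4} \le \epsilon$, and the total variation distance of any measurement outcome distribution is dominated by the trace distance. There is no real obstacle in this proof; the only thing to be careful about is the global-phase freedom (which is harmless since measurement probabilities depend only on $|\langle i|\psi\rangle|^2$) and keeping the normalisation conventions consistent so that the constant works out to exactly $\epsilon$ rather than a larger multiple.
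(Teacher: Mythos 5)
Your proposal is correct, and in fact contains two valid proofs. The paper itself gives no detailed argument: it only remarks parenthetically that the constant $\epsilon$ (improving the $4\epsilon$ of Bernstein--Vazirani) ``can easily be obtained by relating the fidelity of $\ket{\psi_1}$ and $\ket{\psi_2}$ to their trace distance'' --- which is precisely your second, alternative route. Your computation there checks out: with $c = \langle\psi_1|\psi_2\rangle \ge 0$ after re-phasing one has $\epsilon'^2 = 2-2c \le \epsilon^2$, trace distance $\sqrt{1-c^2} = \sqrt{\epsilon'^2 - \epsilon'^4/4} \le \epsilon' \le \epsilon$, and the total variation distance of any measurement statistics is dominated by the trace distance; your remark that the re-phasing is harmless (it can only shrink the $\ell_2$ distance while leaving the outcome probabilities unchanged) is exactly the point one needs to note. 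Your primary route is genuinely different and arguably preferable: writing $a_i = |\langle i|\psi_1\rangle|$, $b_i = |\langle i|\psi_2\rangle|$, the factorisation $|a_i^2 - b_i^2| = |a_i - b_i|(a_i+b_i)$, Cauchy--Schwarz, the pointwise reverse triangle inequality giving $\sum_i (a_i-b_i)^2 \le \epsilon^2$, and $\sum_i (a_i+b_i)^2 \le 4$ yield $\frac{1}{2}\sum_i |p_i - q_i| \le \frac{1}{2}\cdot\epsilon\cdot 2 = \epsilon$ with no mention of density matrices, fidelity, or the Fuchs--van-de-Graaf machinery. What the elementary argument buys is self-containedness and the same sharp constant; what the trace-distance argument buys is that it immediately covers arbitrary POVMs rather than just the computational-basis measurement (which is all the paper needs). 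Either version would serve as a complete proof of the Fact.
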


(This fact is usually presented with $\epsilon$ replaced with $4\epsilon$~\cite{bernstein97}; the tighter constant stated here can easily be obtained by relating the fidelity of $\ket{\psi_1}$ and $\ket{\psi_2}$ to their trace distance, for example.)


\section{Quantum walks on trees}
\label{sec:qwalk}

We now describe a quantum algorithm for detecting a marked vertex in a tree. The algorithm is a special case of a beautiful connection between quantum walks and electrical circuits due to Belovs~\cite{belovs13} (see also~\cite{belovs13a}), which is a quantum analogue of a similar connection between random walks and electrical circuits~\cite{chandra97}. This is conceptually elegant and leads to a very concise proof of a previous result of Szegedy~\cite{szegedy04} on detecting marked elements using a quantum walk. Here we only use these ideas for the special case of trees and a quantum walk starting at the root. This will enable us to simplify some notation and, hopefully, make the algorithm more intuitive.

Consider a rooted tree with $T$ vertices, labelled $r,1,\dots,T-1$, with vertex $r$ being the root, where the distance from the root to any leaf is at most $n$. Assume for simplicity in what follows that the root is promised not to be marked. For each vertex $x$, let $\ell(x)$ be the distance of $x$ from the root. We assume throughout that, although we do not necessarily know the structure of $T$ in advance, we can determine $\ell(x)$ for any $x$. Let $A$ be the set of vertices an even distance from the root (including the root itself), and let $B$ be the set of vertices at an odd distance from the root. We write $x \rightarrow y$ to mean that $y$ is a child of $x$ in the tree. For each $x$, let $d_x$ be the degree of $x$ as a vertex in an undirected graph. Thus, for all $x \neq r$, $d_x = |\{y: x \rightarrow y\}| + 1$; and $d_r = |\{y:r\rightarrow y\}|$.

The quantum walk operates on the Hilbert space $\mathcal{H}$ spanned by $\{\ket{r}\} \cup \{ \ket{x} :  x \in \{1,\dots,T-1\} \}$, and starts in the state $\ket{r}$. Unlike many discrete-time quantum walk algorithms, it does not use a separate ``coin'' space. The walk is based on a set of diffusion operators $D_x$, where $D_x$ acts on the subspace $\mathcal{H}_x$ spanned by $\{\ket{x}\} \cup \{ \ket{y}: x \rightarrow y \}$. The diffusion operators are defined as follows:
\begin{itemize}
\item If $x$ is marked, then $D_x$ is the identity.
\item If $x$ is not marked, and $x \neq r$, then $D_x = I - 2 \proj{\psi_x}$, where
\[ \ket{\psi_x} = \frac{1}{\sqrt{d_x}} \left( \ket{x} + \sum_{y, x \rightarrow y} \ket{y} \right). \]
\item $D_r = I - 2 \proj{\psi_r}$, where
\[ \ket{\psi_r} = \frac{1}{\sqrt{1+ d_r n}} \left( \ket{r} + \sqrt{n} \sum_{y, r \rightarrow y} \ket{y}\right). \] 
\end{itemize}
Observe that $D_x$ can be implemented with only local knowledge, i.e.\ based only on whether $x$ is marked and the neighbourhood structure of $x$. A step of the walk consists of applying the operator $R_B R_A$, where $R_A = \bigoplus_{x \in A} D_x$ and $R_B = \proj{r} + \bigoplus_{x \in B} D_x$. An alternative way of viewing this process is as a quantum walk on the graph given by the {\em edges} of the tree, where we identify each vertex with the edge from its parent in the tree, and add an additional ``input'' edge into the root.

The algorithm for detecting a marked vertex is presented as Algorithm \ref{alg:detect}.

\boxalgm{alg:detect}{Detecting a marked vertex}{
{\bf Input:} Operators $R_A$, $R_B$, a failure probability $\delta$, upper bounds on the depth $n$ and the number of vertices $T$. Let $\beta, \gamma > 0$ be universal constants to be determined.
\begin{enumerate}
\item Repeat the following subroutine $K = \lceil \gamma \log (1/\delta) \rceil$ times:
\begin{enumerate}
\item Apply phase estimation to the operator $R_B R_A$ with precision $\beta/\sqrt{Tn}$.
\item If the eigenvalue is 1, accept; otherwise, reject.
\end{enumerate}
\item If the number of acceptances is at least $3K/8$, return ``marked vertex exists''; otherwise, return ``no marked vertex''.
\end{enumerate}
}

\begin{lem}[Special case of Belovs~\cite{belovs13}]
\label{lem:algdetect}
Algorithm \ref{alg:detect} uses $R_A$ and $R_B$ $O(\sqrt{Tn} \log(1/\delta))$ times. There exist universal constants $\beta$, $\gamma$ such that it fails with probability at most $\delta$.
\end{lem}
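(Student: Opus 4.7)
The query count is immediate from Theorem~\ref{thm:phaseest}: one invocation of phase estimation at precision $\beta/\sqrt{Tn}$ applies the controlled walk $R_BR_A$ a total of $O(\sqrt{Tn}/\beta)$ times, each costing one call to $R_A$ and $R_B$, so the $K=\lceil\gamma\log(1/\delta)\rceil$ outer repetitions use $O(\sqrt{Tn}\log(1/\delta))$ oracle queries in total. For correctness, the plan is to show that a single invocation of the inner subroutine accepts with probability at least $1/2$ in the marked case and at most $1/4$ in the unmarked case; a Chernoff bound on the $K$ independent trials then makes the $3K/8$ threshold decide correctly with error at most $\delta$ for a suitable constant $\gamma$.

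For completeness, fix any marked vertex $m$ at depth $k\le n$ and let $r=v_0,v_1,\dots,v_k=m$ be the unique root-to-$m$ path. I would use the explicit candidate eigenvector
\[ \ket{v}\;=\;\ket{r}\;+\;\frac{1}{\sqrt n}\sum_{i=1}^{k}(-1)^i\ket{v_i}. \]
The weight $-1/\sqrt n$ on $\ket{v_1}$ exactly cancels the coefficient of $\ket{\psi_r}$ in $\ket{v}|_{\mathcal{H}_r}$ (matching the $\sqrt n$ in the root diffusion); the alternating signs at $v_2,\dots,v_{k-1}$ cancel the coefficient of $\ket{\psi_{v_i}}$ in $\ket{v}|_{\mathcal{H}_{v_i}}$ at each internal path vertex; no cancellation is needed at $v_k$ since $D_{v_k}=I$; and off-path restrictions of $\ket{v}$ vanish. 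Thus $\Pi_A\ket{v}=\Pi_B\ket{v}=0$, so $R_A\ket{v}=R_B\ket{v}=\ket{v}$ and $\ket{v}$ is a $+1$-eigenvector of $R_BR_A$. Since $\|\ket{v}\|^2=1+k/n\le 2$ and $\ip{r}{v}=1$, we have $|\ip{r}{v/\|v\|}|^2\ge 1/2$, and Theorem~\ref{thm:phaseest}(2) yields single-shot acceptance probability at least $1/2$.

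For soundness, I would first show that in the unmarked case $\ket{r}$ is exactly orthogonal to the $+1$-eigenspace of $R_BR_A$. Writing this eigenspace in the standard way as $(\mathrm{range}\,\Pi_A\cap\mathrm{range}\,\Pi_B)\oplus(\ker\Pi_A\cap\ker\Pi_B)$, the range-range summand is killed by $\Pi_B\ket{r}=0$, while the kernel-kernel summand is $\{0\}$ because propagating the constraint $\alpha_x+\sum_{x\to c}\alpha_c=0$ (and its $\sqrt n$-weighted analogue at the root) up from the unmarked leaves forces every amplitude to vanish. Small but nonzero phases are controlled via Lemma~\ref{lem:effsg}, applied to a dual witness $\ket{w}=\sum_{x\in A}\beta_x\ket{\psi_x}$ whose coefficients are chosen recursively down the tree so that the lemma's hypothesis holds (automatic modulo the sign flip between $R_A=I-2\Pi_A$ and the lemma's convention, since $\ket{w}\in\mathrm{range}\,\Pi_A$ by construction) and $(I-\Pi_B)\ket{w}=\ket{r}$. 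This is the tree specialization of the electrical-network construction of Belovs~\cite{belovs13}, with the $\beta_x$ playing the role of unit potentials. A level-by-level energy accounting is expected to give $\|\ket{w}\|^2=O(Tn)$: the factor of $n$ arises from the $\sqrt n$-weighted root reflection, and the factor of $T$ from summing squared potential drops across the $O(T)$ vertices. Lemma~\ref{lem:effsg} then yields $\|P_\chi\ket{r}\|\le\chi\|\ket{w}\|=O(\beta)$ for $\chi=\beta/\sqrt{Tn}$, and combining with Theorem~\ref{thm:phaseest}(3,4) --- with the phase-estimation precision set a constant factor below $\chi$ so that the $\theta\ge\chi$ tail contributes $O((2^s\chi)^{-2})$ to the acceptance probability --- pushes the unmarked-case single-shot acceptance probability below $1/4$ for $\beta$ small enough.

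The main obstacle is the construction of the dual witness $\ket{w}$ and the verification of its $O(\sqrt{Tn})$ norm bound. This is the tree specialization of the nontrivial direction of Belovs' quantum-walk/electrical-network correspondence, and it requires the $\sqrt n$-boost in the root reflection to be balanced against the unit-weighted internal diffusions through a careful flow/potential computation over the whole tree.
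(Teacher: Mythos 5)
Your overall architecture coincides with the paper's proof: phase estimation on $R_BR_A$, an explicit $1$-eigenvector for completeness, the effective spectral gap lemma with a dual witness for soundness, and a Chernoff bound to amplify. The completeness half is complete and correct --- your $\ket{v}$ is exactly the paper's $\ket{\phi}$ rescaled by $1/\sqrt{n}$, and the overlap bound $|\ip{r}{v}|^2/\|\ket{v}\|^2 \ge 1/2$ is the same calculation. (The separate argument that $\ket{r}$ is \emph{exactly} orthogonal to the $1$-eigenspace in the unmarked case is unnecessary: Lemma~\ref{lem:effsg} already controls all eigenphases with $|\theta|\le\chi$, including $\theta=0$.)

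The gap is precisely the step you flag as ``the main obstacle'': the construction of the dual witness and its $O(\sqrt{Tn})$ norm bound. This is the quantitative heart of the lemma, and as written your proposal does not establish it --- you defer it to a recursive ``level-by-level energy accounting'' over the whole tree. In fact no such computation is needed in the unmarked case, because with no marked vertex the correct ``potentials'' are uniform: the paper simply takes $\ket{\eta} = \ket{r} + \sqrt{n}\sum_{x\neq r}\ket{x}$. Its restriction to $\mathcal{H}_r$ is $\ket{r}+\sqrt{n}\sum_{y,r\rightarrow y}\ket{y} \propto \ket{\psi_r}$ (the $\sqrt{n}$ weight is exactly what the root reflection's reweighting demands), and its restriction to every other $\mathcal{H}_x$ is $\sqrt{n}\bigl(\ket{x}+\sum_{y,x\rightarrow y}\ket{y}\bigr) \propto \ket{\psi_x}$; hence $\Pi_A\ket{\eta}=0$ and $\Pi_B\ket{\eta}=\ket{r}$ immediately, while $\|\ket{\eta}\|^2 = 1+n(T-1)\le Tn$ is trivial. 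Your prescription $\ket{w}=\sum_{x\in A}\beta_x\ket{\psi_x}$ would, if carried out, reproduce this vector (with $\beta_r=\sqrt{1+d_rn}$ and $\beta_x=\sqrt{nd_x}$ otherwise), so the approach is not wrong --- but the proposal leaves the one nontrivial bound unproved, and it should be stated that the witness is explicit and uniform rather than the outcome of a delicate flow computation. (The genuinely delicate witness construction of this type is the one needed for Lemma~\ref{lem:lowepsbound}, not here.)
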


\begin{proof}
The complexity bound is immediate from Theorem \ref{thm:phaseest}. For the correctness proof, we first show that, if there is a marked vertex, then $\ket{r}$ is quite close to (a normalised version of) an eigenvector $\ket{\phi}$ of $R_B R_A$ with eigenvalue 1. Let $x_0$ be a marked vertex and set
\be \label{eq:phi} \ket{\phi} = \sqrt{n} \ket{r} + \sum_{x \neq r, x \leadsto x_0} (-1)^{\ell(x)} \ket{x}. \ee
Here $x \leadsto x_0$ denotes the vertices $x$ on the unique path from the root to $x_0$, including $x_0$ itself. To see that $\ket{\phi}$ is invariant under $R_B R_A$, first note that $\ket{\phi}$ is orthogonal to all states $\ket{\psi_x}$, where $x \neq r$ and $x$ is not marked. Indeed, any such state $\ket{\psi_x}$ either has uniform support on exactly 2 consecutive vertices $v$ in the path from $r$ to $x_0$, or is not supported on any vertices in this path. $\ket{\phi}$ is also orthogonal to $\ket{\psi_r}$ by direct calculation. We have
\[ \|\ket{\phi}\|^2 = n + \ell(x_0) \le 2 n. \]
Thus
\[ \frac{\ip{r}{\phi}}{\|\ket{\phi}\|} \ge \frac{1}{\sqrt{2}}. \]
Therefore, phase estimation returns the eigenvalue 1 with probability at least $1/2$. On the other hand, if there are no marked vertices, we consider the vector
\[ \ket{\eta} = \ket{r} + \sqrt{n} \sum_{x \neq r} \ket{x}. \]
Let $\Pi_A$ and $\Pi_B$ be projectors onto the invariant subspaces of $R_A$ and $R_B$. These spaces are spanned by vectors of the form $\ket{\psi_x^\perp}$ for $x \in A$, $x \in B$ respectively, where $\ket{\psi_x^\perp}$ is orthogonal to $\ket{\psi_x}$ and has support only on $\{\ket{x}\} \cup \{ \ket{y}: x \rightarrow y \}$; in addition to $\ket{r}$ in the case of $R_B$. 
On each subspace $\mathcal{H}_x$, $x \in A$, $\ket{\eta}$ is proportional to $\ket{\psi_x}$, so $\Pi_A \ket{\eta} = 0$. Similarly $\Pi_B \ket{\eta} = \ket{r}$. By the effective spectral gap lemma (Lemma~\ref{lem:effsg}), $\|P_\chi \ket{r} \| = \|P_\chi \Pi_B \ket{\eta}\| \le \chi \|\ket{\eta}\| \le \chi \sqrt{Tn}$. For small enough $\chi = \Omega(1/\sqrt{Tn})$, this is upper-bounded by $1/2$. By Theorem \ref{thm:phaseest}, there exists $\beta$ such that applying phase estimation to $R_B R_A$ with precision $\beta/\sqrt{Tn}$ returns the eigenvalue 1 with probability at most $1/4$.

Using a Chernoff bound, by repeating the subroutine $O(\log 1/\delta)$ times and returning ``marked vertex exists'' if the fraction of acceptances is greater than $3/8$, and ``no marked vertex'' otherwise, we obtain that the overall algorithm fails with probability at most $\delta$.
\end{proof}


\subsection{Finding a marked vertex}
\label{sec:finding}

From now on, we assume that the degree of every vertex in the tree is $O(1)$; this is not a significant restriction for the application to backtracking. For trees obeying this restriction we can use the detection algorithm as a subroutine to {\em find} a marked vertex efficiently, via binary search.

To find a marked vertex, we start by applying Algorithm \ref{alg:detect} to the entire tree. If it outputs ``marked vertex exists'', we apply the algorithm to the subtrees rooted at each child of the root in turn, to detect marked vertices within each subtree. Assuming the algorithm did not fail at any point, there must be a marked vertex in at least one subtree. We pick the root of one such subtree and check whether it is marked. If it is marked, we output its label and terminate; if it is not marked, we apply Algorithm \ref{alg:detect} to each of its children and repeat. This process continues until we have found a marked vertex. As there are at most $O(n)$ repetitions to reach a leaf and $O(1)$ subtrees are checked at each repetition, the time complexity of the algorithm is multiplied by a factor of $O(n)$. Note that, when we apply the algorithm to subtrees, we must leave the parameter $T$ unchanged; this is because the tree could be quite unbalanced, and a given subtree could contain many vertices.

We have thus far assumed that we know an upper bound on $T$ in advance. If we do not, we can repeat the whole search algorithm $O(\log T) = O(n)$ times, doubling a guess for $T$ each time (starting with $T=1$) until we either find a marked vertex, or the algorithm returns ``no marked vertex''. This exponential doubling does not affect the asymptotic runtime. If our guess for $T$ is too low, the correctness proof of Algorithm \ref{alg:detect} no longer holds, so the detection algorithm may claim that there is a marked vertex in a situation where there is actually no marked vertex. This may lead to the above binary search procedure returning an incorrect result. But we can deal with this situation by checking the final vertex returned by the search algorithm, and only terminating if it is marked; if it is not, we know that the search has failed, and continue doubling our guess for $T$. On the other hand, one can see from inspecting the proof of Lemma \ref{lem:algdetect} that, if there is a marked vertex, the phase estimation subroutine in Algorithm \ref{alg:detect} will accept with probability at least $1/2$ whether or not our guess for $T$ is large enough. Therefore, if there is a marked vertex, Algorithm \ref{alg:detect} will output that a marked vertex exists with probability at least $1-\delta$, for $\delta$ of our choice.

Using this procedure the total number of uses of Algorithm \ref{alg:detect} (with differing values of $T$) is $O(n^2)$, so in order for the whole algorithm to succeed with probability, say, $2/3$, it is sufficient to reduce the failure probability of each use of Algorithm \ref{alg:detect} to $O(1/n^2)$. This costs an additional time factor of $O(\log n)$ per use of the algorithm, giving a total runtime of  $O(\sqrt{T} n^{3/2} \log n)$. This can in turn be improved to an arbitrary failure probability $\delta > 0$ by taking $O(\log 1/\delta)$ repetitions, leading to an overall bound of time $O(\sqrt{T} n^{3/2} \log n \log(1/\delta))$.

Finally, we can find {\em all} marked vertices by simply repeating the algorithm, modifying the underlying oracle operator to strike out previously seen marked elements. If there are $k$ marked elements, the overall runtime is $O(k \sqrt{T} n^{3/2} \log n \log(k/\delta))$.


\subsection{Search with a unique marked element}
\label{sec:uniquemarked}

If we are promised that there exists a unique marked element in the tree, we can improve the above bounds by a factor of almost $n$. In general this improvement is not particularly large, as we usually have $T \gg n$; however, for some ``tall and thin'' trees it can be relatively significant. In particular, following this improvement we see that the complexity of the quantum algorithm for the search problem is never worse than the classical complexity $O(T)$, up to logarithmic factors.

We assume that there is a unique marked vertex $x_0$ and that $\ell(x_0) = n$. This second assumption is without loss of generality. We can determine $\ell(x_0)$ at the start of the algorithm by applying Algorithm \ref{alg:detect} to the subtree rooted at $r$ and of depth $i$, for differing values of $i$. That is, we only expand the tree up to depth $i$, and use binary search on $i \in \{1,\dots,n\}$ to find the minimal $i$ such that the tree of depth $i$ contains $x_0$. This needs $O(\log n)$ repetitions, so the complexity of this part is $O(\sqrt{Tn} \log n \log \log n)$, where the log log term comes from reducing the failure probability of Algorithm \ref{alg:detect} to $O(1/(\log n))$. Once $\ell(x_0)$ is determined, we henceforth only search within the tree of depth $\ell(x_0)$.

Let $\ket{\phi'} = \ket{\phi}/\|\ket{\phi}\|$, where the eigenvector $\ket{\phi}$ is defined in (\ref{eq:phi}), i.e.
\[ \ket{\phi'} = \frac{1}{\sqrt{2}} \ket{r} + \frac{1}{\sqrt{2n}} \sum_{x \neq r, x \leadsto x_0} (-1)^{\ell(x)} \ket{x}. \]
The starting point for the search algorithm is the observation\footnote{A similar observation was used in~\cite{wang13} to approximate effective resistances.} that $\ket{\phi'}$ encodes the entire path from $r$ to $x_0$. If we measure $\ket{\phi'}$, and do not receive outcome $r$, we receive a measurement outcome $y$ which is uniformly distributed on the path from $r$ to $x_0$. We can then repeat the algorithm on the subtree rooted at $y$, obtaining a new state of the form of $\ket{\phi'}$ for a smaller value of $n$. The expected number of measurements we would need to make to find $x_0$ is logarithmic in $n$ (rather than the bound of $O(n)$ which follows from the previous binary search algorithm).

We first bound the total number of quantum walk steps used to find $x_0$, given access to states of the form of $\ket{\phi'}$ for various subtrees. Let $C = O(1/\log n)$ be chosen such that Algorithm \ref{alg:detect} fails with probability at most $1/(4n)$ and uses at most $C \sqrt{Tn}$ steps. Given that $\ell(x_0) = n$, measuring a copy of $\ket{\phi'}$ will give a ``good'' outcome (which is not $r$) with probability $1/2$. The distance from the root of such an outcome is uniformly distributed. Considering only the good outcomes, the expected total number of steps $S_n$ to find $x_0$, given that $\ell(x_0) = n$, therefore satisfies
\[ S_n \le \frac{1}{n} \sum_{i=0}^{n-1} S_i + C \sqrt{Tn}. \]
We claim that $S_n = O(C\sqrt{Tn})$. The proof is by induction. First, $S_0 = 0$ as no quantum walk steps are made. Assume $S_i \le 4C \sqrt{Ti}$ for all $i < n$. Then
\[ S_n \le \frac{4C}{n} \sum_{i=0}^{n-1} \sqrt{Ti} + C \sqrt{Tn} \le 4C \sqrt{\frac{1}{n} \sum_{i=0}^{n-1} Ti } + C \sqrt{Tn} =\frac{4C}{\sqrt{2}}\sqrt{T}\sqrt{n-1} + C \sqrt{Tn} \le 4C\sqrt{Tn}, \]
where the second inequality is Jensen's inequality. As on average half the outcomes are good, the expected total number of steps is thus $O(\sqrt{Tn} \log n)$.

We can approximately produce $\ket{\phi'}$ by applying phase estimation to the operator $R_B R_A$, with input state $\ket{r}$. If we write
\[ \ket{r} = \frac{1}{\sqrt{2}} \ket{\phi'} + \frac{1}{\sqrt{2}} \ket{\phi^\perp}, \]
where $\ket{\phi^\perp}$ is normalised and orthogonal to $\ket{\phi}$, the result of applying phase estimation on $\ket{r}$ with $s$ ancilla qubits is a state of the form
\[  \frac{1}{\sqrt{2}} \ket{\phi'} \ket{0^s} + \frac{1}{\sqrt{2}} \sum_{k, \theta_k > 0} \lambda_k \ket{\psi_k} \ket{\omega_k}, \]
where $\ket{\psi_k}$ is an eigenvector of $R_B R_A$ with eigenvalue $e^{2i\theta_k}$. Write each $\ket{\omega_k}$ as $\ket{\omega_k} = \mu_k \ket{0^s} + \ket{\omega_k'}$ for some subnormalised vectors $\ket{\omega'_k}$ orthogonal to $\ket{0^s}$. If we obtain outcome $\ket{0^s}$ when we measure the second register, the first register collapses to
\[ \ket{\widetilde{\phi}'} = \frac{1}{\sqrt{1+\sum_{k,\theta_k > 0} |\lambda_k \mu_k|^2}} \left( \ket{\phi'} + \sum_{k,\theta_k>0} \lambda_k \mu_k \ket{\psi_k} \right). \]
To bound the distance between $\ket{\widetilde{\phi}'}$ and the desired state $\ket{\phi'}$, we split the sum into two parts. For any $\epsilon > 0$, via Theorem \ref{thm:phaseest} we have
\[ \sum_{k,\theta_k \ge \epsilon} |\lambda_k \mu_k|^2 \le \sum_{k,\theta_k \ge \epsilon} |\mu_k|^2 = O(1 / (2^s \epsilon)). \] 
On the other hand, we prove the following technical claim in Appendix \ref{app:lowepsbound}. Recall that $P_\epsilon$ is the projector onto the span of the eigenvectors of $R_B R_A$ with eigenvalues $e^{2i \theta}$ such that $|\theta| \le \epsilon$.

\begin{lem}
\label{lem:lowepsbound}
$\|P_\epsilon \ket{\phi^\perp} \| = O(\epsilon \sqrt{Tn})$.
\end{lem}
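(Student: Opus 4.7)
The plan is to apply the effective spectral gap lemma (Lemma~\ref{lem:effsg}) with a witness $\ket{\eta}$ satisfying $\Pi_A \ket{\eta} = 0$, $\Pi_B \ket{\eta} = \ket{\phi^\perp}$, and $\|\ket{\eta}\| = O(\sqrt{Tn})$; the lemma then yields $\|P_\epsilon \ket{\phi^\perp}\| = \|P_\epsilon \Pi_B \ket{\eta}\| \le \epsilon \|\ket{\eta}\| = O(\epsilon\sqrt{Tn})$, as claimed. Unlike in the proof of Lemma~\ref{lem:algdetect}, one cannot simply take $\ket{\eta} = \ket{\phi^\perp}$: a direct calculation gives $\ip{\psi_r}{\phi^\perp} \neq 0$, so $\Pi_A \ket{\phi^\perp} \neq 0$, and the witness must carry the marked-vertex structure all the way down the path to $x_0$.

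Concretely, write the path as $r = v_0, v_1, \ldots, v_n = x_0$ and $\ket{\eta} = \sum_x e_x \ket{x}$. Set $e_x = 0$ for every $x$ in the subtree rooted at $x_0$ (including $x_0$ itself); this is forced by $\Pi_A \ket{\eta} = 0$ because $D_{x_0}$ is the identity. On the path take $e_{v_0} = 1/\sqrt{2}$ and $e_{v_i} = (n - 2\lfloor i/2 \rfloor)/\sqrt{2n}$ for $1 \le i \le n$ (I assume $n$ is even; the odd case is handled by a parallel construction in which the final step is symmetric). For every off-path vertex $x$ outside the subtree of $x_0$, let $j(x)$ be the largest index $i$ such that $v_i$ is an ancestor of $x$, and set $e_x = (n - j(x))/\sqrt{2n}$.

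The verification has three parts. For $\Pi_A \ket{\eta} = 0$: at the root $e_{v_1} = \sqrt{n}\cdot e_{v_0}$, so $\ket{\eta}|_{\mathcal{H}_r}$ is proportional to $\ket{\psi_r}$; at every other non-marked $x \in A$ one has $e_x = e_y$ for each child $y$ (either $e_{v_{2k}} = e_{v_{2k+1}}$ holds on the path, or $x$ is off-path and all descendants of $x$ share the same $j$-value), so $\ket{\eta}|_{\mathcal{H}_x}\propto \ket{\psi_x}$; and at $x_0$ the restriction vanishes. For $\Pi_B \ket{\eta} = \ket{\phi^\perp}$: the coefficient on $\ket{r}$ is $e_r = 1/\sqrt{2}$; for off-path $x \in B$ the restriction is again proportional to $\ket{\psi_x}$ and annihilated by $\Pi_B$; and on each on-path $\mathcal{H}_{v_{2k+1}}$ a short calculation yields $(e_{v_{2k+1}} + \sum_{y,\, v_{2k+1}\to y} e_y)/d_{v_{2k+1}} = (n-2k-1)/\sqrt{2n}$, so the projection has coefficient $1/\sqrt{2n}$ on $\ket{v_{2k+1}}$, coefficient $-1/\sqrt{2n}$ on $\ket{v_{2k+2}}$, and $0$ on all other vertices in $\mathcal{H}_{v_{2k+1}}$, which is precisely the restriction of $\ket{\phi^\perp}$ to this subspace. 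Finally, since $|e_x| \le \sqrt{n/2}$ for every $x$, $\|\ket{\eta}\|^2 \le T\cdot n/2 = O(Tn)$.

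The main obstacle is discovering the right profile of coefficients. The flats-and-jumps pattern $e_{v_{2k}} = e_{v_{2k+1}}$ together with $e_{v_{2k-1}} - e_{v_{2k}} = \sqrt{2/n}$ is what causes the alternating $\pm 1/\sqrt{2n}$ signature of $\ket{\phi^\perp}$ to emerge after averaging by $\Pi_B$ over each on-path $\mathcal{H}_{v_{2k+1}}$; the off-path values $(n - j(x))/\sqrt{2n}$ are then uniquely determined by the requirement that every off-path $\mathcal{H}_x$ be proportional to $\ket{\psi_x}$ (and hence killed by $\Pi_B$). Once this shape is guessed, both the $\Pi_A$-annihilation and the $O(\sqrt{Tn})$ norm bound follow mechanically.
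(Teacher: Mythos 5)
Your proof is correct and follows essentially the same route as the paper's: both apply the effective spectral gap lemma to a witness vector that is annihilated by $\Pi_A$ and mapped to $\ket{\phi^\perp}$ by $\Pi_B$, and your explicit coefficients $e_x$ coincide exactly with the $\alpha_x$ the paper derives from the orthogonality constraints (the paper additionally writes out the case $\ell(x_0)$ odd, i.e.\ $x_0 \in B$, which you only sketch). One non-load-bearing nitpick in your motivational aside: $\ip{\psi_r}{\phi^\perp} \neq 0$ does not by itself imply $\Pi_A \ket{\phi^\perp} \neq 0$, since $\Pi_A$ projects onto the orthogonal complement of $\ket{\psi_r}$ within $\mathcal{H}_r$; the correct reason is that the restriction of $\ket{\phi^\perp}$ to $\mathcal{H}_r$ is not proportional to $\ket{\psi_r}$.
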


Given Lemma \ref{lem:lowepsbound}, we have
\[ \sum_{k,0<\theta_k \le \epsilon} |\lambda_k \mu_k|^2 \le \sum_{k,0<\theta_k \le \epsilon} |\lambda_k|^2 = \|P_\epsilon \ket{\phi^\perp}\|^2 = O(\epsilon^2 Tn). \] 
Fixing an accuracy $\delta$ and taking $\epsilon = \Theta(\delta/\sqrt{Tn})$, $2^s = O(\sqrt{Tn}/\delta^3)$, we have $\| \ket{\widetilde{\phi}'} - \ket{\phi'} \| = O(\delta)$. By Fact \ref{fact:closestates}, measuring $\ket{\widetilde{\phi}'}$ in the computational basis is indistinguishable from measuring $\ket{\phi'}$, except with probability $O(\delta)$. If we take $\delta = O(1/\log n)$, the algorithm does not notice the difference on any of the $O(\log n)$ states used, with probability $\Omega(1)$. The overall complexity of the algorithm is therefore $O(\sqrt{Tn} \log^3 n)$\footnote{One way to improve the polylogarithmic factors in this complexity could be to reweight the tree such that the eigenvector of $R_B R_A$ with eigenvalue 1 has more weight on $x_0$ (Alexander Belov, personal communication).}. As before, the failure probability can be made arbitrarily small via repetition.

In Section \ref{sec:barriers} we discuss some barriers to improving the complexity and applicability of these algorithms.


\section{From quantum walks on trees to accelerating backtracking}
\label{sec:walkbacktrack}

To complete the proofs of Theorems \ref{thm:detectbound} and \ref{thm:findbound}, we now verify that Algorithm \ref{alg:detect}  can be applied to search in the tree defined by a backtracking algorithm. In order to do this, it is sufficient to define a suitable efficient mapping between partial assignments and vertices in a tree, and to implement the operators $R_A$ and $R_B$ appropriately and efficiently. As the quantum walk subroutines assume that the root of the tree is not marked, the first step of the algorithm is to check whether $P(\ast^n)$ is true. If so, the algorithm immediately returns ``true''; if not, it runs Algorithm \ref{alg:detect} on a graph defined as follows.

The current state of the backtracking algorithm is represented by a vertex in a rooted tree labelled with a sequence of the form $(i_1,v_1),\dots,(i_\ell,v_\ell)$, for $1 \le \ell \le n$. The sequence corresponds to a partial assignment $x \in \mathcal{D}$ where we assign $x_{i_k} = v_k$ for $k=1,\dots,\ell$, and $x_j = \ast$ for all other indices $j$. The tree only contains vertices corresponding to valid partial assignments. Each vertex except for the root (which is labelled with the empty sequence) is connected to its parent, the vertex labelled with $(i_1,v_1),\dots,(i_{\ell-1},v_{\ell-1})$. It is also connected to all vertices of the form $(i_1,v_1),\dots,(i_\ell,v_\ell),(j,w)$, where $j = h((i_1,v_1),\dots,(i_\ell,v_\ell))$, $w \in [d]$, and $P((i_1,v_1),\dots,(i_\ell,v_\ell),(j,w))$ is not false. That is, all vertices corresponding to valid partial assignments which extend the current partial assigment by assigning a value to the variable whose index is given by $h$. It is convenient to assume that the predicate $P$ and the heuristic $h$ take as input a string of (index, value) pairs which describe value assignments to variables, rather than an element of $\mathcal{D}$; if not, converting between these representations can be done in time $O(n)$. We will also assume that, for all complete assignments, the predicate returns either true or false (as it should do).

The algorithm takes place within the Hilbert space $\mathcal{H}^{(n)} = \C^{n+1} \otimes (\C^{n+1} \otimes \C^{d+1})^{\otimes n}$ together with an ancilla space. Each basis vector within $\mathcal{H}^{(n)}$ represents a partial assignment described by a sequence as above. The first register stores a level $\ell$ between 0 and $n$, representing the length of the sequence (the number of non-$\ast$'s in the assignment). Each of the next $\ell$ registers stores a pair $(i_k,v_k)$ giving the index of a variable (an integer between 1 and $n$) and the assignment to that variable (an integer between 0 and $d-1$). Except during updates to the state, the remaining $n-\ell$ registers all contain the pair $(0,\ast)$. The algorithm can easily be modified to use qubits if desired, rather than systems with dimension $n+1$ and $d+1$, by encoding each subsystem in $O(\log n + \log d)$ qubits.

Let $U_{\alpha,S}$, for $S \subseteq [d]$ and $\alpha \in \R$, act on $\C^{d+1}$ with basis $\{ \ket{\ast},\ket{0},\dots,\ket{d-1} \}$ by mapping $\ket{\ast} \mapsto \ket{\phi_{\alpha,S}}$, where
\[ \ket{\phi_{\alpha,S}} := \frac{1}{\sqrt{\alpha |S|+1}} \left(\ket{\ast} + \sqrt{\alpha} \sum_{i \in S} \ket{i} \right). \]
We assume that, for any subset $S \subseteq [d]$ and any fixed $\alpha \in \R$, we can perform $U_{\alpha,S}$ and its inverse in time $O(1)$ each. (Dependent on the gate set being used, we may not be able to implement $U_{\alpha,S}$ exactly. However, for any universal gate set we can implement it up to accuracy $1-\epsilon$ in time $\poly \log (1/\epsilon)$; this will multiply the runtime of the overall algorithm by at most a polylogarithmic factor.) By applying $U_{\alpha,S}$ and its inverse we can perform the operation $I - 2 \proj{\phi_{\alpha,S}}$.

In order to use Algorithm \ref{alg:detect}, we need to implement the operators $R_A$ and $R_B$. The implementation of $R_A$ using $I - 2 \proj{\phi_{\alpha,S}}$, $P$ and $h$ is described in Algorithm \ref{alg:implra} above. $R_B$ is similar, except that: step 1 is replaced with the check ``If $P(x)$ is true or $\ell=0$, return''; ``odd'' is replaced with ``even'' in steps 2 and 8; and the check ``If $\ell=0$'' is removed from step 6. The first of these changes is because $R_B$ should leave the root of the tree invariant; and the last is because $\ell$ is always odd at that point in the modified algorithm, so the check is unnecessary.

\boxalgm{alg:implra}{Implementation of the operator $R_A$}{
{\bf Input:} A basis state $\ket{\ell}\ket{(i_1,v_1)}\dots \ket{(i_n,v_n)} \in \mathcal{H}^{(n)}$ corresponding to a partial assignment
$x_{i_1} = v_1, \dots, x_{i_\ell} = v_\ell$.  
Ancilla registers $\mathcal{H}_{\text{anc}}$, $\mathcal{H}_{\text{next}}$, $\mathcal{H}_{\text{children}}$, storing a tuple $(a,j,S)$, where $a \in \{\ast\} \cup [d]$, $j \in \{0,\dots,n\}$, $S \subseteq [d]$, initialised to $a=\ast$, $j=0$, $S=\emptyset$.

\begin{enumerate}
\item If $P(x)$ is true, return.
\item If $\ell$ is odd, subtract $h((i_1,v_1),\dots,(i_{\ell-1},v_{\ell-1}))$ from $i_\ell$ and swap $a$ with $v_{\ell}$.
\item If $a \neq \ast$, subtract 1 from $\ell$. (Now $\ell$ is even and $(i_{\ell+1},v_{\ell+1}) = (0,\ast)$.)
\item Add $h((i_1,v_1),\dots,(i_\ell,v_\ell))$ to $j$.
\item For each $w \in [d]$:
\begin{enumerate}
\item If $P((i_1,v_1),\dots,(i_\ell,v_\ell),(j,w))$ is not false, set $S = S \cup \{w\}$.
\end{enumerate}
\item If $\ell = 0$, perform the operation $I - 2 \proj{\phi_{n,S}}$ on $\mathcal{H}_{\text{anc}}$. Otherwise, perform the operation $I - 2 \proj{\phi_{1,S}}$ on $\mathcal{H}_{\text{anc}}$.
\item Uncompute $S$ and $j$ by reversing steps 5 and 4.
\item If $a \neq \ast$, add 1 to $\ell$. If $\ell$ is now odd, add $h((i_1,v_1),\dots,(i_{\ell-1},v_{\ell-1}))$ to $i_\ell$ and swap $v_\ell$ with $a$. (Now $a = \ast$ again.) 
\end{enumerate}
}

We now argue that Algorithm \ref{alg:implra} correctly implements $R_A$. Write $x = (i_1,v_1),\dots,(i_\ell,v_\ell)$ for the partial assignment passed to the algorithm, and write $x' = (i_1,v_1),\dots,(i_{\ell-1},v_{\ell-1})$ for the parent partial assignment in the tree. The goal of the algorithm is to implement the operator $\bigoplus_{x \in A} D_x$ defined in Section \ref{sec:qwalk}. For each $x \in A$, $D_x$ only acts on the subspace corresponding to $x$ and its children. To implement $D_x$, it is therefore sufficient to map the basis state corresponding to $(i_1,v_1),\dots,(i_\ell,v_\ell)$, and all the basis states corresponding to $(i_1,v_1),\dots,(i_\ell,v_\ell),(j,w)$ for $w \in [d]$, where $j = h((i_1,v_1),\dots,(i_\ell,v_\ell))$ and $\ell$ is even, to a $(d+1)$-dimensional subspace on which the children of $x$ can be mixed over using $U_{\alpha,S}$, and then returning to the original subspace. This is precisely what Algorithm \ref{alg:implra} does.

In more detail, the algorithm performs the following steps. First, it does nothing when $x$ is marked, corresponding to the definition of $D_x$. If $x$ is not marked, the behaviour depends on whether $\ell$ is even (corresponding to $x \in A$) or $\ell$ is odd (corresponding to $x \in B$). Define $y$ by setting $y=x$ if $x\in A$, and $y=x'$ if $x \in B$. Then the algorithm implements an inversion about $\ket{\psi_y}$, which is split into three subparts:
\begin{itemize}
\item Steps 2-3: Perform a map of the form $\ket{x} \mapsto \ket{y}\ket{\ast}$ for $x \in A$, and $\ket{x} \mapsto \ket{y}\ket{w}$ for $x \in B$, where $w$ is the value of $x$ at the $h(x')$'th position, i.e.\ the most recent variable assignment that was made by the backtracking algorithm. 
\item Steps 4-5: Determine the children of $y$.
\item Step 6: Perform the operation $I - 2\proj{\psi_y}$ using the knowledge of the children of $y$.
\item Steps 7-8: Uncompute junk and reverse the first map.
\end{itemize}
It can be verified that the algorithm implements the desired behaviour for all basis state inputs $\ket{\ell}\ket{(i_1,v_1)}\dots \ket{(i_n,v_n)}$ such that $(i_1,v_1),\dots,(i_\ell,v_\ell)$ is a valid path in the backtracking tree; we omit the routine details. As the algorithm implements the operation $R_A = \bigoplus_{x \in A} D_x$ unitarily for all basis states $\ket{x}$, it also implements $R_A$ correctly for all superpositions of basis states. Together with the similar implementation of $R_B$, this is enough to implement Algorithm \ref{alg:detect}. For each use of $R_A$ and $R_B$ the algorithm uses $O(1)$ auxiliary operations as claimed.


\section{From quadratic speedups to exponential speedups}
\label{sec:averagecase}

In this section we show that it is possible to leverage the speedup achieved by the quantum backtracking algorithm to obtain much more significant speedups over classical algorithms -- but in a non-standard, average-case setting.

For any (classical or quantum) algorithm $\mathcal{A}$, let $T_\mathcal{A}(X)$ denote the expected runtime of $\mathcal{A}$ on input $X$. Let $\mathcal{P}$ be a distribution on inputs $X$. Imagine we have a quantum algorithm $\mathcal{Q}$ and a classical algorithm $\mathcal{C}$ such that $T_\mathcal{Q}(X) \approx \sqrt{ T_\mathcal{C}(X)}$ for all $X$. This is the case for the quantum algorithms presented here, where for CSPs on $n$ variables we have $T_\mathcal{Q}(X) \le \sqrt{T_\mathcal{C}(X)} \poly(n)$. Then, by Jensen's inequality, we have
\[ \E_{X \sim \mathcal{P}}[T_{\mathcal{Q}}(X)] \le \sqrt{\E_{X \sim \mathcal{P}}[T_{\mathcal{Q}}(X)^2]}  \lesssim \sqrt{\E_{X \sim \mathcal{P}}[T_{\mathcal{C}}(X)]}. \]
However, dependent on the distribution $\mathcal{P}$, taking the average in this way can sometimes amplify the separation to become much greater than quadratic, and even sometimes exponential or super-exponential. This point was noted in the context of quantum query complexity by Ambainis and de Wolf~\cite{ambainis01}, who gave several examples of super-polynomial average-case quantum speedups for the computation of total functions, and later by Montanaro~\cite{montanaro10}, who showed that even the unstructured search problem with a unique marked element, with power-law distributions on the position of this marked element, can display this behaviour.

One very simple example of this phenomenon is the following separation. Let $\mathcal{C}$ be a classical algorithm for Circuit SAT. Assume that, for each integer $n$, there exists an instance of Circuit SAT on $n$ variables such that $\mathcal{C}$ has runtime $\Omega(2^n)$ (this is the case for the best classical algorithms at present~\cite{williams13}). Also let $\mathcal{Q}$ be a quantum algorithm which solves Circuit SAT using Grover's algorithm, using time $O(2^{n/2} \poly(n))$ on an input of size $n$. Finally, let $\mathcal{P}_n$ be the following distribution on instances with $n$ variables: with probability $p$, return a hard instance of size $n$; with probability $1-p$, return a trivial instance. Then
\[ \E_{X \sim \mathcal{P}_n}[T_{\mathcal{C}}(X)] = \Omega(p 2^n), \;\;\;\; \E_{X \sim \mathcal{P}_n}[T_{\mathcal{Q}}(X)] = O(p 2^{n/2} \poly(n)). \]
If we take $p = 2^{-n/2}$, the separation between these two quantities is exponential.

However, this is clearly a rather contrived distribution on the inputs. One might hope to find some problem, together with a more natural distribution on the inputs, which allows a similar exponential separation to be proven. The quantum backtracking algorithm allows one to find separations of this form, given a backtracking algorithm with a suitable distribution of runtimes. Indeed, imagine we have a family of CSPs and a distribution $\mathcal{P}_n$ on problems of size $n$ such that with high probability the problem has $O(1)$ solutions. Further imagine that we have a deterministic classical backtracking algorithm whose backtracking tree contains $T(X)$ vertices on input $X$, such that $\Pr_{\mathcal{P}_n}[T(X) = t] \le C t^{\beta}$ for all $t$ and some constants $C$ and $\beta$. In addition, assume that $\Pr_{\mathcal{P}_n}[T(X) = t] \ge D t^{\beta}$, for some constant $D$, for $M$ different values $t$. Here $M$ is some large integer which we think of as being exponentially large in $n$. Then
\[ \E_{X \sim \mathcal{P}_n}[T(X)] \ge \sum_{t=1}^{M} D t^{\beta} \cdot t = \Omega(M^{\beta+2}). \]
For $\beta > -2$, this quantity is exponentially large. However, if $-2 < \beta < -3/2$, the quantum backtracking algorithm described above uses an average of
\[ \E_{X \sim \mathcal{P}_n}[O(\sqrt{T(X)}\poly(n))] \le \sum_{t \ge 1} O(\sqrt{t} \cdot t^{\beta} \poly(n)) = \poly(n) \]
quantum walk steps. If each step requires time $\poly(n)$ (to evaluate the predicate $P$ and the heuristic $h$) we have obtained an exponential reduction in expected runtime.

We therefore see that a ``power law'' tail of the distribution $\mathcal{P}_n$ of the form $p_t = \Pr_{\mathcal{P}_n}[T(X) = t] \sim t^{\beta}$, for a suitable value of $\beta$, gives us an exponential separation. There is substantial empirical evidence, and some analytical evidence, that such power law, or ``heavy'', tails can occur in both random and real-world instances of CSPs; for a survey, see~\cite{gomes06}. For example, consider the case of graph $k$-colouring on random graphs with $n$ vertices, where each edge is present with probability $\Theta(1/n)$. Hogg and Williams~\cite{hogg94} observed that a natural backtracking algorithm seemed to have a power-law distribution of its runtimes. Later work by Jia and Moore~\cite{jia04} provided some analytical justification for this, and additional experiments, which together suggest that for 3-colouring the distribution is of the form $p_t \sim t^{-1}$.

However, there are several reasons why it is unclear that this phenomenon could lead to exponential separations between quantum and classical expected runtimes. First, there is some evidence that some apparently heavy-tailed behaviour may in fact be due to finite-size effects~\cite{cocco02,cocco05}. Second, one reason for a skewed runtime distribution could be that, on satisfiable instances, the backtracking algorithm sometimes gets lucky and happens to find a satisfying assignment early on, after which it terminates. The runtime of the quantum algorithm described here depends on the size of the whole tree and hence will not correspond to the square root of the classical runtime in this case. Indeed, runtime distributions on unsatisfiable instances do not seem to display the same heavy-tailed behaviour~\cite{rish97,gomes04}.

Third, in many cases power-law behaviour is observed when a randomised backtracking algorithm is run on a {\em single} instance. That is, when the choices of branching variables made by the algorithm are random and we consider the distribution of the runtimes $T(r)$ over the choice of random seed $r$. Algorithmic randomness of this form (as opposed to picking the input instance at random) is not suitable for obtaining an exponential quantum-classical separation using the quantum backtracking algorithm. This is because, if the quantum backtracking algorithm's expected runtime over $r$ is at most $R$, for some $R$, we have $R = \Omega(\E_r[\sqrt{T(r)}])$. So, by Markov's inequality, $T(r) = O(R^2)$ with, say, 99\% probability. Therefore, if we stop the classical algorithm after time $O(R^2)$, it will succeed with probability $\Omega(1)$.

For these reasons, we consider random instances of CSPs produced not just by using a fixed density of constraints, but by taking a distribution over different constraint densities. This enables us to find relatively natural input distributions under which the expected runtime of the quantum backtracking algorithm is exponentially faster.


\subsection{Expected runtime bounds}

There is now a substantial body of work proving bounds on the expected runtime of DPLL-type algorithms for random $k$-SAT. For example, consider $k=3$ and instances consisting of $m = \alpha n$ uniformly random clauses. For $\alpha \gtrsim 4.3$, there is strong evidence that such instances very rarely have a solution~\cite{selman96}, so the task of the algorithm is usually to prove unsatisfiability. Beame et al.~\cite{beame02} have shown that, for a simple DPLL variant (known as ordered DLL) the runtime is $2^{\Theta(n/\alpha)}$ with probability $1-o(1)$. Cocco and Monasson~\cite{cocco01,cocco01a} have used statistical physics techniques to even determine (non-rigorously) the constant in the exponent. In particular, they argue that, for large $\alpha$, the runtime is approximately $2^{0.292n/ \alpha}$.


Sometimes one can prove such tight bounds rigorously. For example, consider the following very simple backtracking algorithm, which fits within the framework of Algorithm \ref{alg:backtrack}. Fix an ordering of the variables from 1 to $n$. Then the heuristic $h$ returns the lowest index of a variable which has not yet been assigned a value. Call this algorithm Na\"iveBt. Then the following result holds:

\begin{prop}
\label{prop:exp}
The expected number of vertices $E$ in the backtracking tree of the Na\"iveBt algorithm when applied to a random $k$-SAT instance on $n$ variables, with $m = \alpha n$ uniformly random clauses, for $1 \le \alpha \le n^{k-1}$, satisfies
\[ 2^{C' n} \le E \le O(n 2^{C n}), \]
where $C$ and $C'$ depend only on $\alpha$ and $k$. For $k=3$, $C \le 0.907/\sqrt{\alpha}$, $C' \ge 0.906/\sqrt{\alpha} - 0.142/\alpha^2$.
\end{prop}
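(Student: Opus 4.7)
I would compute $E$ by linearity of expectation, summing over depths. Since Na\"iveBt's heuristic assigns the $\ell$th variable at depth $\ell$, vertices at depth $\ell$ are in bijection with the subset of $(x_1,\ldots,x_\ell)\in\{0,1\}^\ell$ that falsify no clause (equivalently, no prefix is pruned). For a uniformly random $k$-clause on $n$ variables, the probability its $k$ variables all lie in $\{1,\ldots,\ell\}$ is $\binom{\ell}{k}/\binom{n}{k}$, and conditionally the probability it is falsified by a fixed $(x_1,\ldots,x_\ell)$ is $2^{-k}$. Setting $p_\ell := 2^{-k}\binom{\ell}{k}/\binom{n}{k}$ and using independence of the $m=\alpha n$ clauses,
\[
E \;=\; \Theta\!\left( \sum_{\ell=0}^{n} 2^\ell (1-p_\ell)^m \right),
\]
where the $\Theta(\cdot)$ absorbs the at-most-$d=2$ false-leaf children attached to each internal vertex.

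For the upper bound I use $(1-p_\ell)^m \le e^{-mp_\ell}$ together with $\binom{\ell}{k}/\binom{n}{k} = (\ell/n)^k(1+O(1/n))$, so the $\log_2$ of the $\ell$th term is $n\,g(\ell/n) + o(n)$, where
\[
g(x) \;=\; x - \frac{\alpha}{2^k \ln 2}\, x^k.
\]
Elementary calculus maximises $g$ on $[0,1]$ at $x^\ast = (2^k \ln 2/(\alpha k))^{1/(k-1)}$ when $x^\ast \le 1$, yielding $g(x^\ast) = \tfrac{k-1}{k} x^\ast$. For $k=3$ this evaluates to $(4/3)\sqrt{2\ln 2/3}/\sqrt{\alpha} \approx 0.906/\sqrt{\alpha}$; bounding the $(n+1)$-term sum by $(n+1)$ times its maximum gives $E \le O(n\,2^{Cn})$ with $C \le 0.907/\sqrt{\alpha}$ (the slack absorbing the $o(n)$ error). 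The small-$\alpha$ regime where $x^\ast > 1$ is handled separately using $g(1)$, which remains below $0.907/\sqrt{\alpha}$ on $[1, 8\ln 2/3]$.

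For the lower bound I retain only the single term $\ell^\ast = \lfloor x^\ast n\rfloor$. Using the sharper estimate $\ln(1-p) \ge -p - p^2$, valid because $p_{\ell^\ast} = O(1/\alpha)$ is small, the exponent in base-$2$ becomes $n\,g(x^\ast) - m p_{\ell^\ast}^2/\ln 2 + O(1)$; substituting $x^\ast$ for $k=3$ gives a quadratic correction of $8(\ln 2)^2/(27\alpha^2) \approx 0.142/\alpha^2$ per factor of $n$, delivering $C' \ge 0.906/\sqrt{\alpha} - 0.142/\alpha^2$ as claimed.

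The main obstacle is purely bookkeeping rather than conceptual: one has to check uniformly in the relevant range that the $O(1/n)$ error in the ratio of binomials does not spoil the exponential asymptotics, and handle two boundary regimes -- very small $\alpha$ (where $x^\ast$ leaves $[0,1]$ and the maximiser sits at $x=1$) and very large $\alpha$ near $n^{k-1}$ (where $p_\ell$ need not be uniformly small, but the tree terminates so quickly that $E = 2^{o(n)}$ and any choice of $C,C'$ of the stated form is consistent). Everything else reduces to elementary calculus and linearity of expectation.
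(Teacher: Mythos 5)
Your proposal is correct and follows essentially the same route as the paper's proof in Appendix~\ref{app:ordereddll}: a first-moment computation giving $E=\sum_{\ell}2^{\ell}(1-p_\ell)^m$ with $p_\ell=2^{-k}\binom{\ell}{k}/\binom{n}{k}$, bounding the sum by $(n+1)$ times its largest term for the upper bound via $1-x\le e^{-x}$, and retaining the single maximising term with the second-order estimate $1-x\ge e^{-x-x^2}$ for the lower bound, yielding the identical optimiser $x^\ast=(2^k\ln 2/(\alpha k))^{1/(k-1)}$ and the same constants $0.906/\sqrt{\alpha}$, $0.907/\sqrt{\alpha}$ and $0.142/\alpha^2$ for $k=3$. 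Your explicit flagging of the boundary regimes and of the $O(1/n)$ error in the binomial ratio is, if anything, slightly more careful than the paper, which dismisses these as $O(1)$ corrections.
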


\begin{proof}
See Appendix \ref{app:ordereddll}.
\end{proof}

Similar analyses to Proposition \ref{prop:exp} have been carried out many times in the literature, albeit often for slightly different models (e.g.~\cite{brown81,purdom97}).

Let $C_n$ denote the expected runtime of the Na\"iveBt algorithm applied to 3-SAT instances on $n$ variables, where the expectation is taken with respect to a distribution over numbers of constraints $m$. If the probability that we have $m$ constraints is $p_m$, then by Proposition \ref{prop:exp}
\[ C_n \ge \sum_m p_m 2^{n(0.906 \sqrt{n/m} - 0.142 (n/m)^2) }. \]
On the other hand, up to a $\poly(n)$ factor, on any instance the runtime of the quantum backtracking algorithm is at most the square root of the classical one, multiplied by the number of solutions (if there are any). We consider a distribution $p_m$ which is only supported on values of $m$ such that the probability that a random CSP with $m$ constraints has a solution is $O(2^{-n})$. For such values of $m$, we can ignore the additional cost for finding all the solutions. For 3-SAT, this is true for $m > 16n/(\ln 2)$, for example (see Appendix \ref{app:ordereddll}).

Letting $Q_n$ denote the expected runtime of the quantum algorithm applied to Na\"iveBt, we therefore have
\[ Q_n \le \sum_m p_m 2^{0.454 n^{3/2}/\sqrt{m}} \poly(n). \]
Consider the distribution $p_m \propto 2^{-0.454 n^{3/2}/\sqrt{m}}$ for $16n/(\ln 2) < m \le n^3$. Then
\[ Q_n = \poly(n),\;\;\;\; C_n = \Omega( \sum_{m > 16n/(\ln 2)} 2^{(0.906-0.454) n^{3/2}/\sqrt{m} - 0.142 (n/m)^2 } ) = \Omega(2^{0.094n}),  \]
where the asymptotic bound follows from inserting $m = \lceil 16n/(\ln 2) \rceil$. We therefore have an average-case exponential separation between the quantum and classical complexities of 3-SAT under this distribution; and, indeed, for various other distributions of the form $p_m \propto 2^{-Cn^{3/2}/\sqrt{m}}$. While this family of distributions is arguably less contrived than the Circuit SAT example given above (for example, the number of variables is fixed; only the number of clauses varies), it still appears somewhat unnatural. It seems to be an interesting question to determine more natural input distributions which also lead to exponential quantum speedups.


\section{Improving the quantum walk algorithm?}
\label{sec:barriers}

We finish by addressing the question of how tight the bounds are which we have obtained on quantum search in trees. It is clear that, given a tree with $T$ vertices, we must have a lower bound of the form $\Omega(\sqrt{T})$ for finding a marked vertex (otherwise, we could use the algorithm to solve the unstructured search problem on $T$ elements using $o(\sqrt{T})$ quantum queries, which is impossible~\cite{bennett97}). There are several plausible ways in which the complexity of the algorithm presented here could be improved to get closer to this bound. However, there appear to be some challenges to doing so in each of these cases.

\begin{enumerate}
\item Reduction of the dependence on the depth $n$. It is easy to see that, if we would like to apply the quantum backtracking algorithm to general trees, there must be some dependence on the depth in the runtime. 
Indeed, consider a path on $T$ vertices, which has depth $T-1$. Then, if the marked vertex is the last one in the path, we require $\Omega(T)$ steps to find it. More generally, it was shown by Aaronson and Ambainis~\cite{aaronson05} that for each pair $T$ and $n$, there is a tree containing $T$ vertices and with depth $O(n)$ such that determining the existence of a marked vertex requires $\Omega(\sqrt{Tn})$ queries. This holds even if we know the tree in advance and are allowed to perform arbitrary ``local'' operations to search within it.

\item Reduction of the overhead for searching with multiple marked vertices. It would be interesting to determine whether the search algorithm in Section \ref{sec:uniquemarked} could be generalised to work with a similar efficiency for an arbitrary number of marked vertices. The question of when one can convert a quantum walk speedup for detecting a marked element to a speedup for finding a marked element has been studied previously. But while it was shown by Szegedy~\cite{szegedy04} that the time to detect a marked element using a quantum walk is at most the square root of the classical hitting time, it is not  known whether the time to find a marked element has the same scaling in general.

Indeed, Krovi et al.~\cite{krovi15} have described a way (generalising previous results of~\cite{tulsi08,magniez12}) to modify the original quantum walk approach of Szegedy to obtain a quadratic speedup for the search problem in the case where there is a unique marked element. However, if there is more than one marked element, the runtime of their algorithm scales with a quantity they call the extended hitting time, which may be larger than the hitting time. In any case, all these algorithms assume that the graph is known in advance and the initial state of the quantum walk algorithm corresponds to the stationary distribution of the random walk. Neither of these assumptions applies here.


\item Reduction of the dependence on $k$ to find one, or all, of $k$ marked vertices. For the unstructured search problem with $k$ marked elements out of $T$, Grover's algorithm can find a marked element using $O(\sqrt{T / k})$ queries, which implies an algorithm which finds all marked elements in $O(\sqrt{Tk})$ queries. It would be natural to hope for a bound of a similar form for quantum search on trees, e.g.~$O(\sqrt{Tn/k})$ to find a marked vertex and $O(\sqrt{Tnk})$ to find all $k$ of them. Unfortunately, it is far from clear that this can be achieved.

Indeed, consider the following argument due to Alexander Belov. Imagine we have access to an algorithm $\mathcal{A}$ which finds one of $k > 1$ marked vertices using $o(\sqrt{Tn})$ queries, and consider an arbitrary tree containing one marked leaf $\ell_0$. Modify the tree by attaching a subtree of depth $O(\log k)$ below that leaf containing $k$ vertices, all of which are marked and are labelled such that $\ell_0$ can be determined from their labels. Then, using $\mathcal{A}$, we can find one of these vertices using $o(\sqrt{Tn})$ queries. Finding such a vertex enables us to find $\ell_0$ with no additional queries, contradicting the aforementioned $\Omega(\sqrt{Tn})$ lower bound~\cite{aaronson05}. However, this argument does not rule out the possibility that some other approach could find all $k$ marked vertices in, for example, $O(\sqrt{Tnk})$ time.

\end{enumerate}
One other way in which it might be possible to improve the quantum backtracking algorithm is in situations where the classical backtracking algorithm is lucky and finds a solution without exploring the whole tree. For such instances the quantum algorithm, which is forced to explore the whole tree, may not outperform the classical algorithm. It might be possible to improve the performance of the quantum algorithm in this situation by biasing it to prefer to explore the parts of the tree visited by the classical algorithm earlier on.


\subsection*{Acknowledgements}

I would like to thank Alexander Belov and Aram Harrow for helpful comments on previous versions of this paper. This work was supported by EPSRC Early Career Fellowship EP/L021005/1.


\appendix

\section{Proof of technical claim for search with one marked element}
\label{app:lowepsbound}

In this appendix, we prove the following claim from Section \ref{sec:uniquemarked}:

\begin{replem}{lem:lowepsbound}
$\|P_\chi \ket{\phi^\perp} \| = O(\chi \sqrt{Tn})$.
\end{replem}

Let $x_0$ be the unique marked vertex, assuming for simplicity in the proof (as justified in Section \ref{sec:uniquemarked}) that $\ell(x_0)=n$, and hence that $x_0$ is a leaf in the tree. We can write
\beas \ket{\phi^\perp} &=& \sqrt{2} \ket{r} - \ket{\phi'} = \sqrt{2} \ket{r} - \frac{1}{\sqrt{2n}} \left( \sqrt{n} \ket{r} + \sum_{x \neq r, x \leadsto x_0} (-1)^{\ell(x)} \ket{x} \right)\\
&=& \frac{1}{\sqrt{2}} \ket{r} - \frac{1}{\sqrt{2n}} \sum_{x \neq r, x \leadsto x_0} (-1)^{\ell(x)} \ket{x}. 
\eeas
Recall that $\Pi_A$ and $\Pi_B$ are projectors onto the invariant subspaces of $R_A$ and $R_B$. The invariant subspace of $R_A$ is spanned by vectors of the form $\ket{\psi_x^\perp}$ for each vertex $x \in A$, and if $x_0 \in A$, in addition the vector $\ket{\psi_{x_0}}$. The invariant subspace of $R_B$ is similar (replacing $A$ with $B$) but also contains $\ket{r}$. Here $\ip{\psi_x}{\psi_x^\perp} = 0$ and $\ket{\psi_x^\perp}$ has support only on $\{\ket{x}\} \cup \{ \ket{y}: x \rightarrow y \}$. In order to apply the effective spectral gap lemma, we determine a vector $\ket{\xi}$ such that $\Pi_A \ket{\xi} = 0$ and $\Pi_B \ket{\xi} = \ket{\phi^\perp}$.

First assume $x_0 \in B$. We will take $\ket{\xi}$ to be a linear combination of vectors $\ket{\psi_x}$ for $x \in A$. Then the first of these two constraints is immediately satisfied. The second will be satisfied if, for a set of vectors $\ket{\zeta}$ which span the invariant subspace of $R_B$, i.e.
\[ \ket{\zeta} \in \{ \ket{r}, \ket{\psi_{x_0}} \} \cup \{ \ket{\psi^\perp_x}: \ip{\psi^\perp_x}{\psi_x} = 0, x \in B \}, \]
we have $\ip{\zeta}{\xi} = \ip{\zeta}{\phi^\perp}$. To compute the required inner products, first observe that $\ket{\psi^\perp_x}$ only has support on $x$ and its children, so for all $x$ not on the path from $r$ to $x_0$, $\ip{\psi^\perp_x}{\phi^\perp} = 0$. On the other hand, for each $x \in B$ such that $x \leadsto x_0$, define a basis for the space $\spann \{ \ket{\psi^\perp_x}: \ip{\psi^\perp_x}{\psi_x} = 0 \}$ by fixing the vectors
\[ \ket{\psi^\perp_{x,i}} = -\ket{x} + (d_x-1) \ket{N_i(x)} - \sum_{j \neq i} \ket{N_j(x)}, \]
where $N_i(x)$ denotes the $i$'th child of $x$, recalling that $d_x$ denotes the degree of $x$. We have
\be \label{eq:perpeqns} \ip{\psi^\perp_{x,i}}{\phi^\perp} = \begin{cases} -d_x/\sqrt{2n} & \text{if } i = i_0 \\ 0 & \text{otherwise} \end{cases} \ee
where $i_0$ denotes the unique child of $x$ on the path to $x_0$. 

We now find a vector $\ket{\xi} = \sum_x \alpha_x \ket{x}$ satisfying the above constraints. First, we require $\alpha_r = \ip{r}{\xi} = \ip{r}{\phi^\perp} = 1/\sqrt{2}$ and $\alpha_{x_0} = \ip{x_0}{\phi^\perp} = 1/\sqrt{2n}$. For each $x$, let $x'$ denote the parent of $x$ in the tree. For $\ket{\xi}$ to be a linear combination of vectors $\ket{\psi_x}$, $x \in A$, it is necessary and sufficient that $\alpha_x = \alpha_{x'}$ for all $x \in B$; except in the case $\ell(x) = 1$, where we require $\alpha_x = \sqrt{n}\,\alpha_r$. We in addition need $\alpha_x = \alpha_{x'}$ for all $x \neq r \in A$ such that $x'$ is not on the path to $x_0$, in order that $\ip{\psi^\perp_{x'}}{\xi} = \ip{\psi^\perp_{x'}}{\phi^\perp} = 0$. For each child $y$ of $x \in B$, set $\alpha_y = \gamma$ if $y \not\leadsto x_0$, and $\alpha_y = \delta$ if $y \leadsto x_0$. Then from (\ref{eq:perpeqns}) we have the final constraints that
%
\[ -\alpha_x + 2\gamma - \delta = 0 \text{ if } y \not\leadsto x_0, \]
\[ -\alpha_x - (d_x-2)\gamma + (d_x-1)\delta = -\frac{d_x}{\sqrt{2n}} \text{ if } y \leadsto x_0. \]
These equations have unique solution $\gamma = \alpha_x - 1/\sqrt{2n}$, $\delta = \alpha_x - \sqrt{2/n}$. This now uniquely defines all coefficients $\alpha_x$. In particular, observe that
\[ \alpha_{x_0} = \sqrt{\frac{n}{2}} - \left(\frac{n-1}{2}\right)\sqrt{\frac{2}{n}} = \frac{1}{\sqrt{2n}} \]
as required.

This constructs $\ket{\xi}$ in the case where $x_0 \in B$. If instead $x_0 \in A$, the procedure is similar. Now $\ket{\psi_{x_0}}$ is not in the invariant subspace of $R_B$ (which only makes it easier to satisfy the inner product constraints), but also $\ket{\xi}$ must be a linear combination of vectors $\ket{\psi_x}$ corresponding only to {\em unmarked} vertices $x \in A$. This new constraint implies that now $\alpha_{x_0} = 0$. But following the above procedure now gives
\[ \alpha_{x_0} = \sqrt{\frac{n}{2}} - \frac{n}{2}\sqrt{\frac{2}{n}} = 0 \]
as required. In either case, for all $x$, we have $|\alpha_x| \le \sqrt{n/2}$. So $\|\ket{\xi}\| = O(\sqrt{Tn})$ and hence, by the effective spectral gap lemma (Lemma \ref{lem:effsg}), $\|P_\chi \ket{\phi^\perp}\| = O(\chi \sqrt{Tn})$.


\section{The runtime of the Na\"iveBt backtracking algorithm for $k$-SAT}
\label{app:ordereddll}

Here we find simple, yet fairly precise, bounds on the expected number of vertices in the backtracking tree for the Na\"iveBt algorithm when applied to random $k$-SAT with $k=O(1)$.

Assume we pick a random instance of $k$-SAT by choosing $m=\alpha n$ clauses, for some $\alpha$ such that $1 \le \alpha \le n^{k-1}$. Each clause contains $k$ distinct variables, and each variable can be present negated or unnegated. Each clause is chosen uniformly at random, with replacement, from the set of all $2^k \binom{n}{k}$ allowed such clauses. Recall that the Na\"iveBt algorithm is a backtracking algorithm in the framework of Algorithm \ref{alg:backtrack} where the heuristic $h$ simply picks the lowest index of a variable which has not yet been assigned a value.

The probability that a given assignment to variables $x_1,\dots,x_\ell$ is consistent with all the clauses in such a random instance is
\beas && \left(1 - \Pr[\text{clause only depends on first $\ell$ variables}] \Pr[\text{assignment fails to satisfy clause}] \right)^m \\
&=& \left(1 - \frac{\binom{\ell}{k}}{2^k \binom{n}{k}}\right)^m.
\eeas
The expected number of solutions is
\[ 2^n (1- 2^{-k})^m \le 2^n e^{-2^{-k}\alpha n} = 2^{(1-(2^{-k} \ln 2)\alpha)n}, \]
so for $\alpha \gg 2^k/\ln 2$, the probability that the instance is satisfiable is exponentially small in $n$. The expected number of vertices in the backtracking tree is
\[ E := \sum_{\ell=0}^n 2^\ell \left(1 - \frac{\binom{\ell}{k}}{2^k \binom{n}{k}} \right)^m. \]
We have
\[ \max_{\ell} 2^\ell  \left(1 - \frac{\binom{\ell}{k}}{2^k \binom{n}{k}} \right)^m \le E \le (n+1) \max_{\ell} 2^\ell  \left(1 - \frac{\binom{\ell}{k}}{2^k \binom{n}{k}}\right)^m. \]
For the upper bound, using $1-x \le e^{-x}$ we obtain
\be \label{eq:max} \max_{\ell} 2^\ell \left(1 - \frac{\binom{\ell}{k}}{2^k \binom{n}{k}} \right)^m \le \max_{\ell} 2^\ell e^{-\alpha 2^{-k} n \frac{\ell(\ell-1)\dots(\ell-k+1)}{n(n-1)\dots(n-k+1)} } \le \max_{\ell} 2^\ell e^{-\alpha 2^{-k} \ell^k n^{1-k} (1-k^2/\ell) }. \ee
By taking the derivative over $\ell$ we get that the maximum is achieved for $\ell$ such that
\[ \ell\left(1 - \frac{k(k-1)}{\ell} \right)^{1/(k-1)} = n \left(\frac{2^k \ln 2}{\alpha k}\right)^{1/(k-1)}. \]
The left-hand side is of the form $\ell(1 - O(1/\ell)) = \ell - O(1)$, so we can ignore the $O(1/\ell)$ correction term as it can only change the final bound by an $O(1)$ factor. Rounding $\ell$ to the nearest integer similarly does not change the asymptotic complexity. So, inserting the right-hand side in (\ref{eq:max}) and simplifying, we obtain
%
%
%
%
\[ E = O(n 2^{Cn}), \text{ where } C = \left( \frac{2^k \ln 2}{\alpha k} \right)^{1/(k-1)} \left(1 - \frac{1}{k} \right). \]
On the other hand, inserting this value of $\ell$ into the lower bound and using the inequality $1-x \ge e^{-x-x^2}$ gives
\[ E = \Omega(2^\ell e^{-\alpha 2^{-k} \ell^k n^{1-k} - \alpha 2^{-2k} \ell^{2k} n^{1-2k}} ) = \Omega(2^{C'n}), \]
where
\[ C' = \left( \frac{2^k \ln 2}{\alpha k} \right)^{1/(k-1)} \left(1 - \frac{1}{k} - \frac{\ln 2}{\alpha k^2} \left( \frac{2^k \ln 2}{\alpha k} \right)^{1/(k-1)} \right). \]
For example, for $k = 3$, we have $C \le 0.907/\sqrt{\alpha}$, $C' \ge 0.906/\sqrt{\alpha} - 0.142/\alpha^2$. 


\bibliographystyle{plain}
\bibliography{../../thesis}

\begin{thebibliography}{10}

\bibitem{aaronson05}
S.~Aaronson and A.~Ambainis.
\newblock Quantum search of spatial regions.
\newblock {\em Theory of Computing}, 1:47--79, 2005.
\newblock {\tt quant-ph/0303041}.

\bibitem{ambainis04}
A.~Ambainis.
\newblock Quantum walk algorithm for element distinctness.
\newblock In {\em Proc. 45\textsuperscript{th} Annual Symp. Foundations of
  Computer Science}, pages 22--31, 2004.
\newblock {\tt quant-ph/0311001}.

\bibitem{ambainis10b}
A.~Ambainis, A.~Childs, B.~Reichardt, R.~\v{S}palek, and S.~Zhang.
\newblock Any {AND-OR} formula of size $n$ can be evaluated in time
  $n^{1/2+o(1)}$ on a quantum computer.
\newblock {\em SIAM J. Comput.}, 39(6):2513--2530, 2010.
\newblock {\tt quant-ph/0703015}.

\bibitem{ambainis01}
A.~Ambainis and R.~de~Wolf.
\newblock Average-case quantum query complexity.
\newblock {\em J. Phys. A: Math. Gen.}, 34:6741--–6754, 2001.
\newblock {\tt quant-ph/9904079}.

\bibitem{angelsmark02}
O.~Angelsmark, V.~{Dahll\"of}, and P.~Jonsson.
\newblock Finite domain constraint satisfaction using quantum computation.
\newblock In {\em Proc. Mathematical Foundations of Computer Science 2002},
  pages 93--103, 2002.

\bibitem{beame02}
P.~Beame, R.~Karp, T.~Pitassi, and M.~Saks.
\newblock The efficiency of resolution and {Davis--Putnam} procedures.
\newblock {\em SIAM J. Comput.}, 31(4):1048--1075, 2002.

\bibitem{vanbeek06}
P.~van Beek.
\newblock Backtracking search algorithms.
\newblock In {\em Handbook of Constraint Programming}. Elsevier, 2006.

\bibitem{belovs13}
A.~Belovs.
\newblock Quantum walks and electric networks, 2013.
\newblock {\tt arXiv:1302.3143}.

\bibitem{belovs13a}
A.~Belovs, A.~Childs, S.~Jeffery, R.~Kothari, and F.~Magniez.
\newblock Time-efficient quantum walks for 3-distinctness.
\newblock In {\em Proc. 40\textsuperscript{th} {I}nternational {C}onference on
  {A}utomata, {L}anguages and {P}rogramming (ICALP'13)}, pages 105--122, 2013.

\bibitem{bennett97}
C.~Bennett, E.~Bernstein, G.~Brassard, and U.~Vazirani.
\newblock Strengths and weaknesses of quantum computing.
\newblock {\em SIAM J. Comput.}, 26(5):1510--1523, 1997.
\newblock {\tt quant-ph/9701001}.

\bibitem{bernstein97}
E.~Bernstein and U.~Vazirani.
\newblock Quantum complexity theory.
\newblock {\em SIAM J. Comput.}, 26(5):1411--1473, 1997.

\bibitem{brassard02}
G.~Brassard, P.~H{\o }yer, M.~Mosca, and A.~Tapp.
\newblock Quantum amplitude amplification and estimation.
\newblock {\em Quantum Computation and Quantum Information: A Millennium
  Volume}, pages 53--74, 2002.
\newblock {\tt quant-ph/0005055}.

\bibitem{brown81}
C.~Brown and P.~{Purdom, Jr.}
\newblock An average time analysis of backtracking.
\newblock {\em SIAM J. Comput.}, 10(3):583--593, 1981.

\bibitem{cerf00}
N.~Cerf, L.~Grover, and C.~Williams.
\newblock Nested quantum search and structured problems.
\newblock {\em Phys. Rev. A}, 61(3):032303, 2000.
\newblock {\tt quant-ph/9806078}.

\bibitem{chandra97}
A.~Chandra, P.~Raghavan, W.~Ruzzo, R.~Smolensky, and P.~Tiwari.
\newblock The electrical resistance of a graph captures its commute and cover
  times.
\newblock {\em Computational Complexity}, 6(4):312--340, 1996/1997.

\bibitem{childs03}
A.~Childs, R.~Cleve, E.~Deotto, E.~Farhi, S.~Gutmann, and D.~Spielman.
\newblock Exponential algorithmic speedup by a quantum walk.
\newblock In {\em Proc. 35\textsuperscript{th} Annual ACM Symp. Theory of
  Computing}, pages 59--68, 2003.
\newblock {\tt quant-ph/0209131}.

\bibitem{cleve98a}
R.~Cleve, A.~Ekert, C.~Macchiavello, and M.~Mosca.
\newblock Quantum algorithms revisited.
\newblock {\em Proc. R. Soc. Lond. A}, 454(1969):339--354, 1998.
\newblock {\tt quant-ph/9708016}.

\bibitem{cocco01}
S.~Cocco and R.~Monasson.
\newblock Statistical physics analysis of the computational complexity of
  solving random satisfiability problems using backtrack algorithms.
\newblock {\em Eur. J. Phys. B}, 22:505--531, 2001.
\newblock {\tt cond-mat/0012191}.

\bibitem{cocco01a}
S.~Cocco and R.~Monasson.
\newblock Trajectories in phase diagrams, growth processes, and computational
  complexity: how search algorithms solve the 3-satisfiability problem.
\newblock {\em Phys. Rev. Lett.}, 86(8), 2001.
\newblock {\tt cond-mat/0009410}.

\bibitem{cocco02}
S.~Cocco and R.~Monasson.
\newblock Exponentially hard problems are sometimes polynomial, a large
  deviation analysis of search algorithms for the random satisfiability
  problem, and its application to stop-and-restart solutions.
\newblock {\em Phys. Rev. E}, 66(037101), 2002.
\newblock {\tt cond-mat/0203012}.

\bibitem{cocco05}
S.~Cocco and R.~Monasson.
\newblock Restarts and exponential acceleration of the
  {Davis-Putnam-Loveland-Logemann} algorithm: A large deviation analysis of the
  generalized unit clause heuristic for random {3-SAT}.
\newblock {\em Annals of Mathematics and Artificial Intelligence},
  43(1--4):153--172, 2005.
\newblock {\tt cond-mat/0206242}.

\bibitem{dantsin05}
E.~Dantsin, V.~Kreinovich, and A.~Wolpert.
\newblock On quantum versions of record-breaking algorithms for {SAT}.
\newblock {\em ACM SIGACT News}, 36(4):103--108, 2005.

\bibitem{davis62}
M.~Davis, G.~Logemann, and D.~Loveland.
\newblock A machine program for theorem proving.
\newblock {\em C. ACM}, 5(7):394--397, 1962.

\bibitem{davis60}
M.~Davis and H.~Putnam.
\newblock A computing procedure for quantification theory.
\newblock {\em J. ACM}, 7(3):201--215, 1960.

\bibitem{een03}
N.~{E\'en} and N.~{S\"orensson}.
\newblock An extensible {SAT}-solver.
\newblock In {\em Proc. Theory and Applications of Satisfiability Testing
  2003}, pages 502--518, 2003.

\bibitem{farhi01}
E.~Farhi, J.~Goldstone, S.~Gutmann, J.~Lapan, A.~Lundgren, and D.~Preda.
\newblock A quantum adiabatic evolution algorithm applied to random instances
  of an {NP}-complete problem.
\newblock {\em Science}, 292(5516):472--475, 2001.
\newblock {\tt quant-ph/0104129}.

\bibitem{farhi00}
E.~Farhi, J.~Goldstone, S.~Gutmann, and M.~Sipser.
\newblock Quantum computation by adiabatic evolution.
\newblock Technical Report MIT-CTP-2936, MIT, 2000.
\newblock {\tt quant-ph/0001106}.

\bibitem{farhi98}
E.~Farhi and S.~Gutmann.
\newblock Quantum computation and decision trees.
\newblock {\em Phys. Rev. A}, 58(2):915--928, 1998.
\newblock {\tt quant-ph/9706062}.

\bibitem{freuder06}
E.~Freuder and A.~Mackworth.
\newblock Constraint satisfaction: an emerging paradigm.
\newblock In {\em Handbook of Constraint Programming}. Elsevier, 2006.

\bibitem{furer08}
M.~F{\"u}rer.
\newblock Solving {NP}-complete problems with quantum search.
\newblock In {\em Proc. LATIN'08}, pages 784--792, 2008.

\bibitem{gomes04}
C.~Gomes, C.~Fern{\'a}ndez, B.~Selman, and C.~Bessiere.
\newblock Statistical regimes across constrainedness regions.
\newblock In {\em Principles and Practice of Constraint Programming -- CP
  2004}, pages 32--46, 2004.

\bibitem{gomes08}
C.~Gomes, H.~Krautz, A.~Sabharwal, and B.~Selman.
\newblock Satisfiability solvers.
\newblock In {\em Handbook of Knowledge Representation}, volume~3 of {\em
  Foundations of Artificial Intelligence}, pages 89--134. Elsevier, 2008.

\bibitem{gomes06}
C.~Gomes and T.~Walsh.
\newblock Randomness and structure.
\newblock In {\em Handbook of Constraint Programming}. Elsevier, 2006.

\bibitem{grover97}
L.~Grover.
\newblock Quantum mechanics helps in searching for a needle in a haystack.
\newblock {\em Phys. Rev. Lett.}, 79(2):325--328, 1997.
\newblock {\tt quant-ph/9706033}.

\bibitem{hogg94}
T.~Hogg and C.~Williams.
\newblock The hardest constraint problems: A double phase transition.
\newblock {\em Artificial Intelligence}, 69:359--377, 1994.

\bibitem{jia04}
H.~Jia and C.~Moore.
\newblock How much backtracking does it take to color random graphs? rigorous
  results on heavy tails.
\newblock In {\em Principles and Practice of Constraint Programming -- {CP}
  2004}, pages 742--746, 2004.
\newblock {\tt cond-mat/0412489}.

\bibitem{kitaev95}
A.~Kitaev.
\newblock Quantum measurements and the abelian stabilizer problem, 1995.
\newblock {\tt quant-ph/9511026}.

\bibitem{knuth75}
D.~Knuth.
\newblock Estimating the efficiency of backtrack programs.
\newblock {\em Mathematics of Computation}, 29(129), 1975.

\bibitem{krovi15}
H.~Krovi, F.~Magniez, M.~Ozols, and J.~Roland.
\newblock Quantum walks can find a marked element on any graph.
\newblock {\em Algorithmica}, pages 1--57, 2015.
\newblock {\tt arXiv:1002.2419}.

\bibitem{lee11}
T.~Lee, R.~Mittal, B.~Reichardt, R.~{\v{S}}palek, and M.~Szegedy.
\newblock Quantum query complexity of state conversion.
\newblock In {\em Proc. 52\textsuperscript{nd} Annual Symp. Foundations of
  Computer Science}, pages 344--353, 2011.
\newblock {\tt arXiv:1011.3020}.

\bibitem{lynce03}
I.~Lynce and J.~P. Marques-Silva.
\newblock An overview of backtrack search satisfiability algorithms.
\newblock {\em Annals of Mathematics and Artificial Intelligence}, 37:307--326,
  2003.

\bibitem{magniez12}
F.~Magniez, A.~Nayak, P.~Richter, and M.~Santha.
\newblock On the hitting times of quantum versus random walks.
\newblock {\em Algorithmica}, 63(1--2):91--116, 2012.
\newblock {\tt arXiv:0808.0084}.

\bibitem{magniez11}
F.~Magniez, A.~Nayak, J.~Roland, and M.~Santha.
\newblock Search via quantum walk.
\newblock {\em SIAM J. Comput.}, 40(1):142--164, 2011.
\newblock {\tt quant-ph/0608026}.

\bibitem{montanaro10}
A.~Montanaro.
\newblock Quantum search with advice.
\newblock In {\em Proc. 5\textsuperscript{th} Conference on the Theory of
  Quantum Computation, Communication, and Cryptography (TQC'10)}, 2010.
\newblock {\tt arXiv:0908.3066}.

\bibitem{purdom97}
P.~Purdom.
\newblock Backtracking and random constraint satisfaction.
\newblock {\em Annals of Mathematics and Artificial Intelligence}, 20:393--410,
  1997.

\bibitem{rish97}
I.~Rish and D.~Frost.
\newblock Statistical analysis of backtracking on inconsistent {CSPs}.
\newblock In {\em Principles and Practice of Constraint Programming-CP97},
  pages 150--162, 1997.

\bibitem{schoning99}
U.~Sch{\"o}ning.
\newblock A probabilistic algorithm for k-{SAT} and constraint satisfaction
  problems.
\newblock In {\em Proc. 40\textsuperscript{th} Annual Symp. Foundations of
  Computer Science}, pages 410--414, 1999.

\bibitem{selman96}
B.~Selman, D.~Mitchell, and H.~Levesque.
\newblock Generating hard satisfiability problems.
\newblock {\em Artificial Intelligence}, 81(1--2):17--29, 1996.

\bibitem{shenvi03}
N.~Shenvi, J.~Kempe, and K.~Whaley.
\newblock Quantum random-walk search algorithm.
\newblock {\em Phys. Rev. A}, 67(5):052307, 2003.
\newblock {\tt quant-ph/0210064}.

\bibitem{stockmeyer85}
L.~Stockmeyer.
\newblock On approximation algorithms for {\#}{P}.
\newblock {\em SIAM J. Comput.}, 14(4):849--861, 1985.

\bibitem{szegedy04}
M.~Szegedy.
\newblock Quantum speed-up of {M}arkov chain based algorithms.
\newblock In {\em Proc. 45\textsuperscript{th} Annual Symp. Foundations of
  Computer Science}, pages 32--41, 2004.
\newblock {\tt quant-ph/0401053}.

\bibitem{tulsi08}
A.~Tulsi.
\newblock Faster quantum walk algorithm for the two dimensional spatial search.
\newblock {\em Phys. Rev. A}, 78:012310, 2008.
\newblock {\tt arXiv:0801.0497}.

\bibitem{wang13}
G.~Wang.
\newblock Quantum algorithms for approximating the effective resistances in
  electrical networks, 2013.
\newblock {\tt arXiv:1311.1851}.

\bibitem{williams13}
R.~Williams.
\newblock Improving exhaustive search implies superpolynomial lower bounds.
\newblock {\em SIAM J. Comput.}, 42(3):1218--1244, 2013.

\end{thebibliography}

\end{document}